\pdfoutput=1
\documentclass[11pt,a4paper]{article}
\usepackage{lmodern}
\usepackage[T1]{fontenc}
\usepackage{fullpage}
\usepackage{float}
\floatstyle{boxed}
\restylefloat{figure}
\usepackage[tbtags]{amsmath}
\allowdisplaybreaks[4]
\usepackage{amssymb}
\usepackage{amsthm}
\usepackage{hyperref}
\hypersetup{colorlinks={true},linkcolor={blue},citecolor=red}
\usepackage[capitalise,nameinlink]{cleveref}
\usepackage[numbers,sort&compress]{natbib}
\usepackage[font=footnotesize]{caption}

\newtheorem{theorem}{Theorem}

\newtheorem{claim}{Claim}
\theoremstyle{definition}

\theoremstyle{remark}
\newtheorem*{fact}{Fact}
\crefname{claim}{Claim}{Claims}


\def \R{\mathbb R}
\newcommand{\sset}[1]{\left\{ #1\right\}}
\newcommand{\ssets}[1]{\{ #1\}}
\DeclareMathOperator*{\expect}{\mathbb E}
\newcommand{\vecc}[1]{\ensuremath{\mathbf{#1}}}
\newcommand{\probability}[1]{\ensuremath{\mathrm{Pr}\left[#1\right]}}
\newcommand{\bmin}{\hat{t}_{\min}}

\newcommand{\bids}{\hat{\vecc{t}}}
\newcommand{\bidi}{\hat{t}_i}
\newcommand{\bidk}{\hat{t}_k}
\newcommand{\bid}[1]{\ensuremath{\hat{t}_{#1}}}
\newcommand{\lrg}{L}

\DeclareMathOperator*{\argmin}{argmin}
\DeclareMathOperator*{\second}{sec}
\newcommand{\bsec}{\hat{t}_{\second}}

\renewcommand{\paragraph}[1]{\medskip\noindent\textbf{#1.\;}}

\title{The Anarchy of Scheduling Without Money\thanks{Supported by ERC Advanced
Grant 321171 (ALGAME), EPSRC grant EP/M008118/1 and the Alexander von Humboldt Foundation with funds from the German Federal Ministry of Education and Research (BMBF).
\newline 
A preliminary
version of this paper, not including all results, appeared in SAGT'16~\citep{GKK16}. 
\newline
A significant part of this work was carried out while Yiannis Giannakopoulos was at the University of Liverpool and Maria Kyropoulou was at the University of Oxford.}}

\author{Yiannis Giannakopoulos\thanks{Department of Mathematics, TU Munich. Email: \href{mailto:yiannis.giannakopoulos@tum.de}{\nolinkurl{yiannis.giannakopoulos@tum.de}}   }
\and Elias Koutsoupias\thanks{Department of Computer Science, University of Oxford. Email: \href{mailto:elias@cs.ox.ac.uk}{\nolinkurl{elias@cs.ox.ac.uk}}    }
\and Maria Kyropoulou\thanks{School of Computer Science and Electronic Engineering, University of Essex. Email: \href{mailto:maria.kyropoulou@essex.ac.uk}{\nolinkurl{maria.kyropoulou@essex.ac.uk}}    }}

\date{December 4, 2018}
\begin{document}
\maketitle

\begin{abstract}
We consider the scheduling problem on $n$ strategic unrelated machines when no
payments are allowed, under the objective of minimizing the makespan. We adopt
the model introduced in [Koutsoupias 2014] where a machine is \emph{bound} by
her declarations in the sense that if she is assigned a particular job then she
will have to execute it for an amount of time at least equal to the one she
reported, even if her private, true processing capabilities are actually faster.
We provide a (non-truthful) randomized algorithm whose pure \emph{Price of
Anarchy} is arbitrarily close to $1$ for the case of a single task and close to
$n$ if it is applied independently to schedule many tasks, which is
asymptotically optimal for
the natural class of anonymous, task-independent algorithms.
Previous work considers the constraint of truthfulness and proves a tight approximation ratio of $(n+1)/2$ for one task which generalizes to $n(n+1)/2$ for many tasks.
Furthermore, we revisit the truthfulness case and reduce the latter approximation ratio for many tasks down to $n$, asymptotically matching the best known lower bound. This is done via a detour to the relaxed, fractional version of the problem, for which we are also able to provide an optimal approximation ratio of $1$. Finally, we mention that all our algorithms achieve optimal ratios of $1$ for the social welfare objective.
\end{abstract}

\section{Introduction}
We consider a variant of the scheduling problem proposed by \citet{K14} where no
payments are allowed and the machines are bound by their declarations. In
particular, the goal is to allocate a set of tasks to strategic unrelated machines while minimizing the makespan. The time/cost needed by a machine to execute a task is private information of the machine. Each machine is rational and selfish, and will misreport her costs in an attempt to minimize her own overall running time, under the assumption that if she is allocated a task, she will execute it for at least the declared cost (more specifically, for the maximum among her true and reported execution times). We are interested in designing allocation protocols that do not use payments and the stable outcomes are not far from the non-strategic, centrally enforced optimum makespan.

The field of Mechanism Design \cite{Nisan:2007:AGT:1296179} focuses on the
implementation of desired outcomes. Given the strategic behaviour of the players
who provide the input and a specific objective function that measures the
quality of the outcome, the challenge is to design mechanisms which are able to elicit a desired behaviour from the players, while at the same time optimizing that objective value. A primary designer goal that has been extensively studied is that of \emph{truthfulness}, under the central solution concept of dominant strategies: a player should be able to optimize her own individual utility by reporting truthfully, no matter what strategies the other players follow. However, achieving this is not always compatible with maintaining a good objective value \citep{10.2307/1914083,SATTERTHWAITE1975187}. The introduction of \emph{payments} was suggested as a means towards achieving these goals, since a carefully designed payment scheme incentivizes the players to make truthful declarations. The goal now becomes to design such algorithms (termed \emph{mechanisms}) which utilize monetary compensations in order to impose truthful behaviour while optimizing the objective function.
A prominent positive result exists for utilitarian settings at which the objective function is the social welfare, where the well-known VCG mechanism \citep{Vickrey61,Clarke71,Groves73}  constitutes such an optimal mechanism. The study of the algorithmic aspect of mechanism design was initiated by \citet{Nisan:2001aa}, and since then a significant body of work has been dedicated to optimization problems from the mechanism design point of view (see e.g.\ \citep{AT01,DDDR11,Lavi:2011:TNM:2049697.2049699}).

There are many situations, though, where the use of payments might be considered unethical \cite{Nisan:2007:AGT:1296179}, illegal (e.g.\ organ donations) or even just impractical. For this reason researchers have started turning their attention to possible ways of achieving truthfulness without the use of payments. In such a setting, in order to circumvent Social Choice impossibility results (e.g.\ the seminal Gibbard-Satterthwaite \citep{10.2307/1914083,SATTERTHWAITE1975187} theorem), domains with richer structure have to be considered. \citet{Procaccia:2009:AMD:1566374.1566401} were the first to consider achieving truthfulness without using payments, by sacrificing the optimality of the solution and settling for just an approximation, in the context of facility location problems.
This work was extended in \citep{10.2307/40800845,Lu:2010:AOS:1807342.1807393}.
Similar questions have been considered in the context of inter-domain routing \citep{Levin:2008:IRG:1374376.1374388}, in assignment problems \citep{Dughmi:2010:TAW:1807342.1807394}, and in the setting of allocating items to two players (with the use of a certain artificial currency) \citep{Guo:2010:SAM:1838206.1838324}. Moreover, exact (as opposed to approximate) mechanism design without money has a rich history in the social choice literature, e.g.\ \cite{Moulin1980} characterizes functions that are truthfully implementable  when the preferences of the agents are single-peaked.

Clearly, truthfulness is a property desired by every mechanism designer; if the mechanism can ensure that no player can benefit from misreporting, the  designer knows what kind of player behaviour and outcome to expect. Moreover, the focus on truthful mechanisms has been largely motivated by the Revelation Principle stating that essentially every equilibrium state of a mechanism can be simulated by a truthful mechanism which achieves the same objective. However this is no longer possible in the variant we examine here, due to the fact that the players are bound by their declarations and thus do not have quasi-linear utilities. So, it is no longer without loss of generality if we restrict attention to truthful mechanisms. For mechanisms that are not truthful, \emph{Price of Anarchy} (PoA)~\citep{Koutsoupias:2009:WE:2296005.2296046} analysis is the predominant, powerful tool for quantifying the potential suboptimality of the outcomes/equilibria; it measures the impact the lack of coordination (strategic behaviour) has on the solution quality, by comparing it to the optimal and non-strategic solution.

Scheduling is one of the most influential problems in Algorithmic Game Theory
and has been studied extensively. In its most general form, the goal is to
schedule $m$ tasks to $n$ parallel machines with arbitrary processing times, in
order to minimize the makespan. In the front where payments are allowed, truthfulness comes at no extra cost given the strategic nature of the machines. \citet{Nisan:2001aa} first considered the mechanism design approach of the problem.
They prove that the well known VCG mechanism
achieves an $n$-approximation of the optimal makespan, while no truthful deterministic mechanism can achieve
approximation ratio better than $2$. The currently known best lower bound is $2.61$ \citep{KV13} while \citet{ADL12} proved the
tightness of the upper bound for anonymous mechanisms. With respect to randomized (truthful in expectation) mechanisms as well as fractional ones, the best known bounds are $(n+1)/2$ and $2-1/n$ \citep{Mu'alem:2007:SLB:1283383.1283506,Christodoulou:2010:MDF:1721837.1721854}. We note that the aforementioned lower bounds disregard computational feasibility and simply rely on the requirement for truthfulness.

In an attempt to get positive results when payments are not allowed in the scheduling context, \citet{K14} first considered the plausible assumption that the machines are bound by their declarations. This was influenced by the notion of impositions, according to which a mechanism can restrict the set of reactions available to a player after the outcome is chosen. This notion appeared in \citep{Nissim:2012:AOM:2090236.2090254,Fotakis2010} and was applied in facility location as well as digital goods pricing settings. The notion of winner imposition fits within the framework of approximate mechanism design without payments. A more powerful framework that is also very much related to this assumption is the notion of verification that appears in \citep{Nisan:2001aa,Auletta2004,Penna2014491}. The mechanisms in this context are allowed to use payments and simply give or deny payments to machines after they discover their true execution costs.
In particular, the mechanism receives limited information about the players' types after observing the computed solution.
Relevant works include \citep{Angel2009, Christodoulou2007} where the scheduling problem of selfish tasks is considered again under the assumption that the players who control the tasks are bound by their declarations.

\begin{paragraph}{Our Results} In this work we adopt the model of \citep{K14}.
For the case of scheduling a single task \citet{K14} proved that the
approximation ratio of any (randomized) truthful
mechanism is at least $(n+1)/2$ and gave a mechanism matching this bound, where
$n$ is the number of machines. When applied to many tasks, this mechanism
immediately implies an $n(n+1)/2$ approximation ratio for the makespan objective.
In \cref{sec:PoA} we provide a (non-truthful) algorithm which performs
considerably better than the best truthful mechanism; even the worst pure
equilibrium/outcome of our algorithm achieves an optimal makespan, i.e.\ our
algorithm has a pure PoA of $1$. Pure Nash equilibria constitute a very strong solution concept, that in general need not even exist, and they are also very desirable from a mechanism designer's perspective. For our algorithm we actually prove both existence of pure Nash equilibria and almost optimal performance guarantees for \emph{all} of them.\footnote{Clearly, pure equilibria are a refinement of mixed ones so, from the perspective of mixed Nash equilibrium analysis, our positive results can additionally be seen as an (almost) optimal Price of Stability guarantee (see also the discussion at the start of \cref{sec:PoS}).} If we run this algorithm independently for each
job, we get a task-independent and anonymous
algorithm yielding a PoA of $n$ for any number of tasks, which we show to be
asymptotically optimal. Next, revisiting truthfulness, in \cref
{sec:truthfulness} we also show that the mechanism inspired by the LP relaxation of the problem is provably truthful and achieves an optimal approximation ratio $1$ for the fractional scheduling problem of divisible tasks while providing an $n$-approximation when interpreted as a randomized mechanism. This almost matches the lower bound of $(n+1)/2$ for truthful mechanisms known from \citep{K14}. Finally, in \cref{sec:PoS} we briefly study the more optimistic objective of minimizing the makespan at the best possible equilibrium (instead of the worst one used in the Price of Anarchy metric) and show that the natural greedy algorithm achieves an optimal Price of Stability. \end{paragraph}

\section{Model and Notation}\label{sec:model}
We have a set $N=\sset{1,2,\dots,n}$ of unrelated parallel machines and $m$
tasks/jobs that need to be scheduled to these machines. Throughout the text we
assume that matrix $\vecc t\in\R^{n\times m}_{\geq 0}$ denotes the true
execution times, i.e.\ $t_{i,j}$ is the time machine $i$ needs to execute task $j$. This is private knowledge of each machine $i$. Let $\bids$ denote the corresponding (not necessarily true) \emph{declarations} of the machines for these costs.

A (randomized) \emph{allocation protocol} takes as input the machines' declarations $\bids$ and outputs an allocation $\vecc A$ of tasks to machines where $A_{ij}$ is a $0$--$1$ random variable indicating whether or not machine $i$ gets allocated task $j$ and $\vecc a$ is the corresponding probability distribution of allocation, i.e.\ $a_{i,j}=\probability{A_{i,j}=1}$ where of course $\sum_{i=1}^na_{i,j}=1$ for any task $j$.

We assume that machines are \emph{bound by their declarations}:
if a machine $i$ is allocated some task $j$, then she will
execute the task for time $\max\{t_{i,j},\bid{i,j}\}$. So, the expected cost/workload of machine $i$ is defined as
\begin{equation}
\label{eq:costplayer}
C_i(\bids|\vecc t_i)=\sum_{j=1}^m a_{i,j}(\hat{\vecc t}) \max\sset{\hat t_{i,j},t_{i,j}},
\end{equation}
while the \emph{makespan} is computed as the expected maximum execution time
$$
\mathcal M(\bids|\vecc t_i)=\expect_{\vecc{A}\sim \vecc a(\bids)}\left[\max_{i=1,\dots, n}\sum_{j=1}^m A_{i,j}\max\sset{\hat t_{i,j},t_{i,j}}\right].
$$
To simplify notation, whenever the true execution times $\vecc t$ are clear from the context we will drop them and simply use $C_i(\bids)$ and $\mathcal{M}(\bids)$.

The allocation protocol is called \emph{truthful}, or a truthful mechanism, if it does not give incentives to the machines to misreport their true execution costs.
Formally, for every machine $i$ and declarations matrix $\bids$,
$$C_i(\vecc t_i,\bids_{-i})\leq C_i(\bids),$$
where $(\vecc x_i,\vecc y_{-i})$ denotes the matrix of declarations where machine $i$ has deviated to  $\vecc x_i$ while all other machines report costs as in $\vecc y$. The \emph{approximation ratio} measures the performance of truthful mechanisms and is defined as the maximum ratio, over all instances, of the objective value (makespan) under that mechanism over the optimal objective value achievable by a centralized solution which ignores the truthfulness constraint.

If an allocation protocol is not truthful (we simply refer to it as an algorithm),
we measure its performance by the quality of its Nash equilibria: the states
from which no player has the incentive to unilaterally deviate. The \emph{Price
of Anarchy} (PoA)~\citep{Koutsoupias:2009:WE:2296005.2296046} is established as a meaningful benchmark and captures the
maximum ratio, over all instances, of the objective value of the worst equilibrium over that of the optimal centralized solution that ignores the machines' incentives. For most part of this paper we restrict attention to pure Nash equilibria where the machines make deterministic reports about their execution costs, and we will from now on refer to them simply as \emph{equilibria}. Then, the corresponding benchmark is called pure PoA. A more optimistic benchmark is the \emph{Price of Stability} (PoS)~\citep{Schulz2003,ADKTWR04} which compares the objective value of the \emph{best} equilibrium to the value of the optimal centralized solution.

The makespan objective is inherently different if we consider \emph{divisible tasks}, i.e.\ fractional allocations. In that case, each machine is allocated a portion of each task by the protocol and the makespan is computed as the maximum of the execution times of the machines, namely
$$
\mathcal M^f(\bids)=\max_{i=1,\dots , n}\sum_{j=1}^m \alpha_{i,j}(\bids)\max\sset{\hat t_{i,j},t_{i,j}}
$$
where $\alpha_{i,j}(\bids)\in [0,1]$ is the \emph{fraction} of task $j$ allocated to machine $i$. Again, it must be that $\sum_{i=1}^n\alpha_{i,j}=1$ for any task $j$.
Notice here that each fractional algorithm with allocation fractions $\vecc
\alpha$ naturally gives rise to a corresponding randomized integral algorithm
with allocation probabilities $\vecc a=\alpha$, whose makespan is within a
factor of $n$ from the fractional one\footnote{This is due to the fact that for any random variables $Y_1,Y_2,\dots,Y_n$ it is $\expect[\max_i Y_i]\leq \expect[\sum_i Y_i]=\sum_{i}\expect[Y_i]\leq n\max_{i}\expect[Y_i]$, and also $\max_{i}\expect[Y_i]\leq \expect[\max_i Y_i]$ due to the convexity of the $\max$ function.}, i.e.\ for any cost vector $\vecc t$
\begin{equation}
\label{eq:fractintegralineq}
\mathcal M^f(\vecc t) \leq \mathcal M(\vecc t) \leq n\cdot \mathcal M^f(\vecc t).
\end{equation}
Except when clearly stated otherwise, in this paper we deal with the integral version of the scheduling problem.

\paragraph{Social Welfare} An alternative objective, very common in the Mechanism Design literature, is that of optimizing \emph{social welfare}, i.e. minimizing the combined costs of all players: $\mathcal W(\bids)=\sum_{i=1}^n C_i(\bids)$.
It is not difficult to see\footnotemark[\value{footnote}] that the makespan and social welfare objectives are within a factor of $n$ away, whatever the allocation algorithm $\vecc a$ and the input costs $\bids$ might be:
\begin{equation}
\label{eq:makespanwelfareineq}
\mathcal M(\bids)\leq \mathcal W(\bids)\leq n\cdot \mathcal M(\bids).
\end{equation}
Also notice that for the special case of a single task, since the job is eventually allocated entirely to some machine, the two objectives coincide no matter the number of machines $n$, i.e.\ $\mathcal M(\bids)= \mathcal W(\bids)$. Because of that and the linearity of the social welfare with respect to the players' costs, it is easy to verify that all algorithms we present in this paper achieve optimal ratios of $1$ for that objective, both with respect to equilibrium/PoA and truthfulness analysis (e.g.\ \cref{th:PoAnmachines,th:lp_integral_n}). We will not mention that explicitly again in the remaining of the paper and rather focus on makespan minimization, which is a more challenging objective for our scheduling problem.

\section{Price of Anarchy}
\label{sec:PoA}
For clarity of exposition, we first describe our scheduling algorithm in the special case of just $n=2$ machines (and one task) before presenting the algorithm for the general case of $n\geq 1$. Since we treat the case of only one task in this section, we use $\bid{i}$ and $t_i$ to denote the declared and true execution time of machine $i$, respectively, and use $a_i$ to denote $i$'s allocation probability.

\subsection{Warm Up: The Case of Two Machines}
\label{sec:PoA2machines}
To simplify notation, throughout this section we will assume without loss of generality that $\bid{1}\leq \bid{2}$, i.e.\ the input to our algorithm is sorted in nondecreasing order.
Notice that the true execution times $\vecc t=(t_1,t_2)$ do not have to preserve this ordering, since the highest biding machine might very well in reality not have the fastest execution capabilities.

Our algorithm for the case of two machines, parameterized by two constants  $\lrg>2$, $c> 1$, and denoted by $\mathcal{A}_{\lrg,c}^{(2)}$, is defined by the allocation probabilities in \cref{fig:algo2machines}.
\begin{figure}[t]
\begin{equation*}
\begin{array}{l|ccc}
& a_1 & a_2 \\
\hline&&\\
\text{if}\quad \bid{1}=\bid{2} & \frac{1}{2} &  \frac{1}{2} \\
&&\\
\text{if}\quad \bid{1}<\bid{2}< c\cdot \bid{1} &\frac{1}{\lrg} &1-\frac{1}{\lrg}\\
&&\\
\text{if} \quad c\cdot \bid{1}\leq \bid{2} &1-\frac{1}{\lrg}\frac{\bid{1}}{\bid{2}} & \frac{1}{\lrg}\frac{\bid{1}}{\bid{2}}
\end{array}
\end{equation*}
\caption{Algorithm $\mathcal A^{(2)}_{\lrg,c}$ for scheduling a single task to two machines, parametrized by $\lrg>2$ and $c>1$. The probability that machine $i=1,2$ gets the task is denoted by $a_i$, and $\bid{1},\bid{2}$ are the reported execution times by the machines.}
\label{fig:algo2machines}
\end{figure}
Whenever parameter $c$ is insignificant in a particular context\footnote{In such a case, as it is for example in the statement of \cref{th:PoA_1task}, one can simply pick e.g.\ $c=1+\frac{1}{\lrg}$.}, we will just use $\mathcal{A}_{\lrg}^{(2)}$.

The main result of this section is the following theorem, showing that by choosing parameter $\lrg$ arbitrarily high, the above algorithm can achieve an optimal Price of Anarchy:
\begin{theorem}
\label{th:PoA_1task}
For the case of one task and two machines, algorithm $\mathcal{A}_{\lrg}^{(2)}$
has a (pure) Price of Anarchy of at most $1+\frac{1}{\lrg}$ (for any $\lrg>2$).
\end{theorem}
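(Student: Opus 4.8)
The plan is to fix an arbitrary pure equilibrium, with sorted declarations $\bid{1}\le\bid{2}$ and true costs $t_1,t_2$, and to prove that its makespan is at most $\left(1+\frac1\lrg\right)\min\{t_1,t_2\}$, which is exactly $\left(1+\frac1\lrg\right)$ times the optimal (centralized) makespan. The one simplification that makes this tractable is that, with a single task, exactly one machine receives it, so $\mathcal M(\bids)=C_1(\bids)+C_2(\bids)$; it therefore suffices to bound the two individual equilibrium costs.

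First I would establish two structural facts about any equilibrium.
\emph{(a) The low declarer does not overbid: $\bid{1}\le t_1$.} Freezing $\bid{2}$ and viewing $C_1$ as a function of machine~$1$'s report on $[0,\bid{2})$, it equals $\left(1-\frac1\lrg\frac{\bid{1}}{\bid{2}}\right)\max\{t_1,\bid{1}\}$ on the ``case 3'' part and $\frac1\lrg\max\{t_1,\bid{1}\}$ on the ``case 2'' part; using $\lrg>2$ and $\bid{1}<\bid{2}$, both expressions are strictly increasing in $\bid{1}$ once $\bid{1}$ exceeds $t_1$, so any report above $t_1$ is beaten by a slightly smaller one in the same region and cannot be a best response (configurations with no best response at all are certainly not equilibria).
\emph{(b) The fast machine is the low declarer: $t_1\le t_2$.} This is the crux. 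Assuming instead $t_2<t_1$, machine~$2$ is the faster one but the high declarer, and I would show she always has a profitable deviation: if her report lies in ``case 2'' her allocation probability $1-\frac1\lrg$ is too large and she strictly gains by raising it past $c\bid{1}$ into ``case 3'' (driving her cost down toward $\bid{1}/\lrg$); if it lies in ``case 3'' she gains either by raising her report to at least $t_2$ (so her realized time equals her report and her cost equals $\bid{1}/\lrg$) or by undercutting machine~$1$ to just inside ``case 2'', where her probability is only $\frac1\lrg$ and her realized time is at most $\max\{t_2,\bid{1}/c\}<\bid{1}$ — the last inequality needing $t_2<\bid{1}$. The single leftover configuration, $\bid{1}\le t_2$, is excluded not through machine~$2$ but through machine~$1$, who is then not best-responding, by the analysis in (a).

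Granting (a) and (b), the conclusion is immediate. By (a), $C_1=a_1\max\{t_1,\bid{1}\}=a_1t_1\le t_1$. And machine~$2$ can always secure cost at most $\bid{1}/\lrg$ by declaring any $D\ge\max\{t_2,c\bid{1}\}$: she becomes the high declarer in ``case 3'', where her probability is $\frac1\lrg\frac{\bid{1}}{D}$ and her realized time is $D$, for a cost of exactly $\bid{1}/\lrg$; hence at equilibrium $C_2\le\bid{1}/\lrg\le t_1/\lrg$, again by (a). Adding up, $\mathcal M(\bids)=C_1+C_2\le t_1+\frac{t_1}{\lrg}=\left(1+\frac1\lrg\right)\min\{t_1,t_2\}$, as required.

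The main obstacle is step (b): ruling out equilibria in which the \emph{slower} machine ends up as the low declarer. This is precisely where the ``case 2'' buffer region — equivalently the parameter $c>1$ (here $c=1+\frac1\lrg$) — is essential; without it such equilibria would exist and the Price of Anarchy would be unbounded, so the argument has to track carefully which deviation (by the fast machine, or in the residual sub-case by the other one) breaks each configuration. The degenerate profiles need only a quick remark: a tie $\bid{1}=\bid{2}$ is never an equilibrium, since undercutting slightly drops a machine's probability from $\frac12$ to $\frac1\lrg<\frac12$; and $t_1=t_2$ requires nothing extra, as then $\min\{t_1,t_2\}=t_1$ and (a) already yields the bound.
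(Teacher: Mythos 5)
Your high-level strategy mirrors the paper's: establish structural properties of equilibria (ordering of bids versus true times) and then bound the makespan. Your decomposition $\mathcal M(\bids)=C_1+C_2$ with the direct deviation bound $C_2\le\bid{1}/\lrg$ is a pleasant streamlining — it sidesteps the paper's explicit characterization of $\bid{2}$ (its Claim~2) and of $\bid{1}$ as exactly $\min\{t_1,\bid{2}/c\}$ (its Claim~3), using only the weaker fact $\bid{1}\le t_1$ together with a one-line ``secure cost'' argument for machine~2.

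However, there is a genuine gap in step~(b), precisely in the leftover configuration $\bid{1}\le t_2<t_1$ with $\bid{2}\ge\max\{t_2,c\bid{1}\}$. You claim this is ``excluded \ldots by the analysis in~(a),'' but (a) only shows that $C_1$ is increasing in machine~1's report \emph{above} $t_1$, whence $\bid{1}\le t_1$. In the leftover case $\bid{1}<t_1$ already holds, so (a) is silent and no contradiction follows. What is actually needed is the \emph{opposite} monotonicity: for $\bid{1}<\min\{t_1,\bid{2}/c\}$, machine~1's cost $\bigl(1-\tfrac{\bid{1}}{\lrg\bid{2}}\bigr)t_1$ is strictly \emph{decreasing} in $\bid{1}$, so she profits by raising her bid towards $\min\{t_1,\bid{2}/c\}$; and if $\bid{1}=\bid{2}/c<t_1$, she still profits by stepping into the ``case 2'' region, where her probability jumps down to $\tfrac1\lrg$ and her cost drops from $\bigl(1-\tfrac{1}{\lrg c}\bigr)t_1>\tfrac12 t_1$ to at most $\tfrac{t_1}{\lrg}$ (using $\lrg>2$). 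This is exactly what the paper's Claim~3 (``$\bid{1}=\min\{t_1,\bid{2}/c\}$ at equilibrium'') supplies, and what its Claim~4 invokes to dispatch this very sub-case. To close your proof you should either upgrade~(a) to that two-sided characterization of machine~1's best response, or insert the above deviation argument explicitly in~(b).
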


We break down the proof of \cref{th:PoA_1task} in distinct claims.
\begin{claim}
\label{claim:1dev2tasks}
At \emph{any} equilibrium $\bids$ the ratio of the two bids must be at least $c$, i.e.\ $\bid{2}\geq c\cdot \bid{1}$.
\end{claim}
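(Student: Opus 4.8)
The plan is to prove the contrapositive by a deviation argument. Suppose $\bids$ is an equilibrium with $\bid{2}<c\cdot\bid{1}$ in the sorted convention $\bid{1}\le\bid{2}$; I will exhibit a machine with a strictly profitable unilateral deviation, contradicting equilibrium. First note that $\bid{2}<c\cdot\bid{1}$ together with $c>1$ forces $\bid{1}>0$: if $\bid{1}=0$ then $\bid{2}<0$, impossible, and $\bid{1}=\bid{2}=0$ violates the strict inequality. So both declarations are strictly positive. By the definition of $\mathcal A^{(2)}_{\lrg,c}$ in \cref{fig:algo2machines}, there are then exactly two regimes to rule out: the tie $\bid{1}=\bid{2}$, and the near-tie $\bid{1}<\bid{2}<c\cdot\bid{1}$.

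The single fact I would use in both regimes is this: a machine's cost is $C_i(\bids)=a_i(\bids)\max\{\bidi,t_i\}$, and its true cost $t_i$ is fixed, so replacing its declaration by a \emph{smaller} one cannot increase the factor $\max\{\bidi,t_i\}$; hence any downward deviation that \emph{strictly} lowers that machine's allocation probability is strictly profitable (strict because $\max\{\bidi,t_i\}\ge\bidi>0$ in both regimes). In the tie case $\bid{1}=\bid{2}=b>0$, each machine currently receives probability $\tfrac{1}{2}$, so I would let one of them drop its declaration to $b-\varepsilon$. For $\varepsilon$ small enough that still $b<c(b-\varepsilon)$ — possible since $c>1$ — the new profile lands in the middle regime with this machine as the low bidder, giving it probability $\tfrac{1}{\lrg}<\tfrac{1}{2}$ (using $\lrg>2$); by the fact above this is a strict improvement. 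In the near-tie case $\bid{1}<\bid{2}<c\cdot\bid{1}$, the high bidder currently has probability $1-\tfrac{1}{\lrg}$, and I would let it lower its declaration to exactly $\bid{1}$, creating a tie and dropping its probability to $\tfrac{1}{2}$, where $\tfrac{1}{2}<1-\tfrac{1}{\lrg}$ again because $\lrg>2$, and where $\bid{1}\le\bid{2}$ ensures the $\max$-factor does not grow. Either way the equilibrium assumption is contradicted, so $\bid{2}\ge c\cdot\bid{1}$.

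I do not anticipate a genuine obstacle; the argument is a short case analysis. The one place that needs care is the interface between the three regimes of the algorithm: the downward deviation in the tie case must be small enough not to overshoot into the third regime $c\cdot\bid{1}\le\bid{2}$, where the low bidder would instead get probability close to $1$ and the deviation would backfire — hence the $\varepsilon$ with $b<c(b-\varepsilon)$. The other thing worth spelling out is exactly where the hypothesis $\lrg>2$ enters — precisely in the inequalities $\tfrac{1}{\lrg}<\tfrac{1}{2}$ and $\tfrac{1}{2}<1-\tfrac{1}{\lrg}$ — and that strictness of the improvement uses the declarations being positive, which is why the $\bid{1}>0$ observation comes first.
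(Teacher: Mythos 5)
Your proof is correct, and the two-case decomposition (tie versus near-tie) mirrors the paper's. The tie case $\bid{1}=\bid{2}$ is handled in essentially the same way — a small downward deviation landing in the middle regime — though you are a touch more careful than the paper: the paper's choice $t_1'=\bid{1}/c$ actually puts the new profile exactly on the boundary $c\cdot t_1'=\bid{2}$, which by \cref{fig:algo2machines} falls into the \emph{third} regime (where the low bidder's probability is $1-\frac{1}{\lrg}\frac{t_1'}{\bid{2}}$, not $\frac{1}{\lrg}$), so strictly speaking the paper should have picked $t_1'\in(\bid{1}/c,\bid{1})$; your explicit $\varepsilon$ constraint $b<c(b-\varepsilon)$ is the right fix. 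The genuine divergence is the near-tie case $\bid{1}<\bid{2}<c\bid{1}$: the paper has the high bidder deviate \emph{upward} to $t_2'=\max\{c\bid{1},t_2\}$, pushing into the third regime with cost exactly $\frac{\bid{1}}{\lrg}$, whereas you have it deviate \emph{downward} to $\bid{1}$, creating a tie with probability $\frac{1}{2}$. Both deviations are strictly profitable for $\lrg>2$. Your route is the cleaner one: it lets both cases be dispatched by a single monotonicity observation (a downward deviation never increases the $\max\{\bidi,t_i\}$ factor, so strictly reducing the allocation probability strictly reduces cost), whereas the paper's upward move requires the extra chain $\bigl(1-\tfrac{1}{\lrg}\bigr)\max\{\bid{2},t_2\}>\tfrac{1}{\lrg}\bid{1}=\tfrac{\bid{1}}{\lrg t_2'}\max\{t_2',t_2\}$. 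What the paper's choice buys is consistency of technique — the ``bid very high to escape'' deviation recurs in \cref{claim:4dev2tasks} and in the $n$-machine analysis — so it is pedagogically convenient there, but for this particular claim your argument is tighter.
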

\begin{proof}
Without loss assume $\bid{1}\neq 0$, since otherwise the claim is trivially true. First, assume for a contradiction that $\bid{1}<\bid{2}<c\cdot \bid{1}$.
Then the machine with larger report would have an incentive to deviate to bid $t_2'=\max\{c\bid{1}, t_2\}$:
\begin{equation*}
C_2(\bids)= \left(1-\frac{1}{\lrg}\right) \max\{\bid{2},t_2\}
> \frac{1}{\lrg} \bid{1}
=\frac{\bid{1}}{\lrg t_2'} \max\{t_2',t_2\}
=C_2(\bid{1},t_2')
\end{equation*}
where the inequality holds since $\lrg>2$ and the final two equalities hold because the deviating bid $t_2'$ equals $\max\{c\bid{1}, t_2\}$. Thus $\bids=(\bid{1},\bid{2})$ could not have been an equilibrium under the assumption that $\bid{1}<\bid{2}<c\cdot \bid{1}$.

A similar contradiction can be obtained for the remaining case of $\bid{1}=\bid{2}$. In this case, both machines have an incentive to deviate to a bid $t_1'$ such that $\frac{\bid{1}}{c}<t_1'< \bid{1}$, since
\begin{equation*}
C_1(\bids)= \frac{1}{2} \max\{\bid{1},t_1\}
\geq  \frac{1}{2} \max\{t_1',t_1\}
> \frac{1}{\lrg} \max\{t_1',t_1\}
= C_1(t_1',\bid{2}).
\end{equation*}
We can conclude that indeed $\bid{2}\geq c\bid{1}$ at any equilibrium.
\end{proof}

\begin{claim}
\label{claim:2dev2tasks}
At any equilibrium $\bids$ the machine with the larger report will never have underbid, i.e.\ $\bid{2}\geq t_2$.
\end{claim}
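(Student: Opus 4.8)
The plan is to use \cref{claim:1dev2tasks} to pin down the active branch of \cref{fig:algo2machines} at an equilibrium and then exhibit a profitable deviation for the high bidder whenever she has underbid. Fix an equilibrium $\bids$; we may assume $\bid1>0$ (the case $\bid1=0$ is degenerate: one checks that then either machine~1 strictly gains by moving to a tiny positive report, so no equilibrium exists, or $t_1=0$ and $a_2=0$, so machine~2 never runs the task and neither the statement nor the eventual makespan bound is affected). By \cref{claim:1dev2tasks} we then have $\bid2\geq c\,\bid1>0$, which places us in the third row of \cref{fig:algo2machines}, i.e.\ $a_2=\frac1\lrg\frac{\bid1}{\bid2}$.

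Now suppose for contradiction that machine~2 has underbid, $t_2>\bid2$. Then her equilibrium cost is
\[
C_2(\bids)=a_2\max\sset{\bid2,t_2}=\frac{1}{\lrg}\,\frac{\bid1}{\bid2}\,t_2 .
\]
Consider her deviating to the truthful report $t_2$. Since $t_2>\bid2\geq c\,\bid1$, the profile $(\bid1,t_2)$ is still governed by the third row, so her new probability is $\frac1\lrg\frac{\bid1}{t_2}$ and
\[
C_2(\bid1,t_2)=\frac{1}{\lrg}\,\frac{\bid1}{t_2}\,\max\sset{t_2,t_2}=\frac{1}{\lrg}\,\bid1 .
\]
As $t_2/\bid2>1$, this gives $C_2(\bids)=\frac1\lrg\bid1\cdot\frac{t_2}{\bid2}>\frac1\lrg\bid1=C_2(\bid1,t_2)$, contradicting that $\bids$ is an equilibrium; hence $\bid2\geq t_2$.

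The argument reduces to a one-line computation, so there is no deep obstacle. The two points that need care are: (i) verifying that the deviating report $t_2$ stays in the same branch of the algorithm, so that the closed form $a_2=\frac1\lrg\frac{\bid1}{\bid2}$ may legitimately be reused before and after the move — this follows immediately from $t_2>\bid2\geq c\,\bid1$; and (ii) the degenerate case $\bid1=0$, where the stated inequality can only fail in configurations that do not influence the makespan. I would treat (i) as the (mild) crux, since the whole deviation analysis hinges on remaining in the third row both before and after machine~2 deviates.
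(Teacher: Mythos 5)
Your argument is correct and follows essentially the same route as the paper: invoke \cref{claim:1dev2tasks} to land in the third row of \cref{fig:algo2machines}, assume $\bid2<t_2$, and show that the truthful deviation $t_2$ strictly lowers machine~2's cost from $\frac{\bid1}{\lrg\bid2}t_2$ to $\frac{\bid1}{\lrg}$. The only cosmetic difference is that you spell out explicitly (i) why the post-deviation profile $(\bid1,t_2)$ stays in the third branch (namely $t_2>\bid2\geq c\,\bid1$) and (ii) the degenerate case $\bid1=0$ (where $a_2\equiv0$ makes the strict inequality vacuous), points the paper handles implicitly or defers to \cref{claim:1dev2tasks}.
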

\begin{proof}
Assume for a contradiction that $\bid{2}< t_2$. Then
\begin{equation*}
C_2(\bids) =  \frac{\bid{1}}{\lrg \bid{2}} \max\{\bid{2},t_2\}
=  \frac{\bid{1}}{\lrg \bid{2}} t_2
>\frac{\bid{1}}{\lrg t_2} t_2
= C_2(\bid{1},t_2),
\end{equation*}
the first equality holding due to \cref{claim:1dev2tasks} and the last one because $t_2>\bid{2}\geq c \bid{1}$.
\end{proof}

\begin{claim}
\label{claim:3dev2tasks}
At any equilibrium $\bids$ the smaller bid is given by $\bid{1}= \min\{t_1,\frac{\bid{2}}{c}\}$.
\end{claim}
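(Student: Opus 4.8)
The plan is to invoke \cref{claim:1dev2tasks} to collapse machine~$1$'s relevant deviation space down to a single interval, and then read off the minimizer of her cost on that interval. So suppose $\bids$ is an equilibrium. The degenerate case $\bid{2}=0$ forces $\bid{1}=0=\min\{t_1,\bid{2}/c\}$ and is immediate, so assume $\bid{2}>0$. By \cref{claim:1dev2tasks} we have $\bid{1}\leq\bid{2}/c$, hence the profile lies in the third regime of \cref{fig:algo2machines} and $C_1(\bids)=\bigl(1-\tfrac{1}{\lrg}\tfrac{\bid{1}}{\bid{2}}\bigr)\max\{\bid{1},t_1\}$.

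First I would observe that, keeping machine~$2$'s report fixed, \emph{every} deviation of machine~$1$ to some $x\in[0,\bid{2}/c]$ keeps the profile in this same third regime with machine~$1$ still the lower bidder, so its cost is exactly $g(x):=\bigl(1-\tfrac{x}{\lrg\bid{2}}\bigr)\max\{x,t_1\}$, and in particular $C_1(\bids)=g(\bid{1})$. The core computation is then to show that $g$ has a unique minimizer over $[0,\bid{2}/c]$, namely $x^\star:=\min\{t_1,\bid{2}/c\}$. On $[0,x^\star]$ we have $x\le t_1$, so $g(x)=t_1-\tfrac{t_1}{\lrg\bid{2}}\,x$ is strictly decreasing; and when $t_1<\bid{2}/c$, on the remaining piece $[t_1,\bid{2}/c]$ we have $g(x)=x-\tfrac{x^2}{\lrg\bid{2}}$ with derivative $1-\tfrac{2x}{\lrg\bid{2}}>0$, the positivity coming from $x\leq\bid{2}/c<\bid{2}$ together with $\lrg>2$ (so $\tfrac{2x}{\lrg\bid{2}}<\tfrac{2}{\lrg}<1$); thus $g$ is strictly increasing there. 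In both cases ($t_1\le\bid{2}/c$ and $t_1>\bid{2}/c$) the unique minimum over $[0,\bid{2}/c]$ sits at $x^\star$. Finally, since $\bids$ is an equilibrium machine~$1$ gains nothing from any such deviation, i.e.\ $g(\bid{1})\le g(x)$ for all $x\in[0,\bid{2}/c]$, which forces $\bid{1}=x^\star=\min\{t_1,\bid{2}/c\}$.

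The step I expect to take the most care is not any single calculation but rather pinning down the right deviation set: one has to notice that to contradict equilibrium it suffices to exhibit a profitable deviation \emph{within} the third regime (namely to $x^\star$), so there is no need to separately rule out deviations into the first two regimes of the algorithm or into profiles where machine~$1$ becomes the high bidder — those are simply allowed to be worse for her. The remaining bookkeeping (the boundary values $\bid{2}=0$ and $t_1=0$, continuity of $g$ at $x=t_1$, and the two orderings of $t_1$ versus $\bid{2}/c$) is routine.
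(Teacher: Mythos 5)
Your proof is correct and takes essentially the same route as the paper: both arguments reduce to analyzing the cost $g(x)=(1-\tfrac{x}{\lrg\bid{2}})\max\{x,t_1\}$ of machine $1$ as she varies her report over $[0,\bid{2}/c]$, the interval allowed by \cref{claim:1dev2tasks}, and both invoke the same monotonicity facts (the linear-in-$x$ part is decreasing; the quadratic $x\mapsto(1-\tfrac{x}{\lrg\bid{2}})x$ is increasing on the relevant range since $\lrg>2$). The paper presents this as two contradiction cases ($\bid{1}<x^\star$ and $\bid{1}>x^\star$), each exhibiting a profitable deviation to $x^\star$, whereas you package the same computation as ``$g$ has unique minimizer $x^\star$ on $[0,\bid{2}/c]$, so equilibrium forces $\bid{1}=x^\star$.'' Your framing is a touch cleaner and makes the boundary cases ($t_1=0$, $\bid{2}=0$) more transparent, but the mathematical content is identical.
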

\begin{proof}
Assume for a contradiction that $\bid{1}\neq t_1'=\min\{t_1,\frac{\bid{2}}{c}\}$. Then, we will show that the lower bidding machine would have an incentive to deviate from $\bid{1}$ to $t_1'$.

Indeed, first consider the case when $\bid{1}<t_1'$. Then
\begin{equation*}
C_1(\bids) = \left(1-\frac{\bid{1}}{\lrg \bid{2}}\right) \max\{\bid{1},t_1\}
> \left(1-\frac{t_1'}{\lrg \bid{2}}\right) \max\{t_1',t_1\}\\
= C_1(t_1',\bid{2}).
\end{equation*}
In the remaining case of $\bid{1}> t_1'=\min\{t_1,\frac{\bid{2}}{c}\}$, because of \cref{claim:1dev2tasks} it must be that $t_1'=t_1<\bid{1}\leq \frac{\bid{2}}{c}$, thus
\begin{equation*}
C_1(\bids)= \left(1-\frac{\bid{1}}{\lrg \bid{2}}\right) \max\{\bid{1},t_1\}\\
= \left(1-\frac{\bid{1}}{\lrg \bid{2}}\right) \bid{1}\\
>  \left(1-\frac{t_1}{\lrg \bid{2}}\right) t_1\\
= C_1(t_1,\bid{2}).
\end{equation*}
where the inequality holds since $x\mapsto\left(1-\frac{x}{y}\right) x$ is a strictly increasing function for $x\in[0,\frac{y}{2}]$, and indeed $t_1<\bid{1}<\bid{2}<\frac{\lrg \bid{2}}{2}$.
\end{proof}

\begin{claim}
\label{claim:4dev2tasks}
At any equilibrium $\bids$ bidding must preserve the relative order of the true execution times, i.e. $t_1\leq t_2$.
\end{claim}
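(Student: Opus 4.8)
The plan is to argue by contradiction: assume $\bids$ is an equilibrium with $t_1 > t_2$ and exhibit a profitable unilateral deviation for one of the two machines. The backbone is \cref{claim:3dev2tasks}, which forces $\bid{1}=\min\{t_1,\bid{2}/c\}$; combined with \cref{claim:1dev2tasks} ($\bid{2}\geq c\bid{1}$) this puts the reported pair in the third regime $c\bid{1}\leq\bid{2}$ of \cref{fig:algo2machines}, so the low bidder (machine $1$) receives probability $1-\frac{1}{\lrg}\frac{\bid{1}}{\bid{2}}$ and the high bidder (machine $2$) receives $\frac{1}{\lrg}\frac{\bid{1}}{\bid{2}}$. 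I would then split according to which of the two terms attains the minimum in \cref{claim:3dev2tasks}.

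In the main case $\bid{1}=t_1$ (i.e.\ $t_1\leq\bid{2}/c$): using \cref{claim:2dev2tasks} one computes machine $2$'s equilibrium cost as $C_2(\bids)=\frac{1}{\lrg}\frac{\bid{1}}{\bid{2}}\cdot\bid{2}=\frac{t_1}{\lrg}$ (even without \cref{claim:2dev2tasks} one at least gets $C_2(\bids)\geq\frac{t_1}{\lrg}$). I would then let machine $2$ deviate to any report $\bid{2}'\in(t_1/c,\,t_1)$. Since $\bid{2}'<t_1=\bid{1}$, machine $2$ becomes the low bidder and machine $1$ the high bidder, with bid-ratio $t_1/\bid{2}'\in(1,c)$, i.e.\ we land in the middle regime, where the low bidder gets probability exactly $\frac{1}{\lrg}$. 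Machine $2$'s new cost is $\frac{1}{\lrg}\max\{\bid{2}',t_2\}$, and because both $\bid{2}'<t_1$ and (by assumption) $t_2<t_1$, this is strictly smaller than $\frac{t_1}{\lrg}$ --- a profitable deviation, contradiction.

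In the remaining case $\bid{1}=\bid{2}/c<t_1$ (so $\bid{2}=c\bid{1}$ and machine $1$ strictly underbids), machine $1$'s equilibrium cost is $(1-\frac{1}{\lrg c})\max\{\bid{1},t_1\}=(1-\frac{1}{\lrg c})t_1$. I would let machine $1$ nudge her report upward to $\bid{1}'=\bid{1}+\varepsilon$ for a small $\varepsilon>0$ with $\bid{1}'<\min\{t_1,\bid{2}\}$ (possible since $\bid{1}<t_1$ and $\bid{1}<\bid{2}$). Machine $1$ stays the low bidder, but the ratio $\bid{2}/\bid{1}'$ drops below $c$, so her probability collapses from $1-\frac{1}{\lrg c}$ down to $\frac{1}{\lrg}$ while her execution time stays $\max\{\bid{1}',t_1\}=t_1$; the new cost $\frac{t_1}{\lrg}$ beats the old one precisely because $\frac1\lrg+\frac1{\lrg c}<1\iff(\lrg-1)c>1$, which holds since $\lrg>2$ and $c>1$. (The degenerate option $\bid{1}=0$ forces, via \cref{claim:3dev2tasks}, either $t_1=0$ --- impossible as $t_1>t_2\geq0$ --- or $\bid{2}=0$; in the latter the two reports coincide and machine $1$ strictly prefers any arbitrarily small positive report, which makes her the high bidder with allocation probability $0$ and cost $0<\frac{t_1}{2}$.) In every case the equilibrium assumption is violated, so $t_1\leq t_2$.

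I expect the only real subtlety to be this second case: it is tempting to read \cref{claim:3dev2tasks} as already giving $\bid{1}=t_1$, but a priori it only yields the minimum, and ruling out $\bid{1}=\bid{2}/c$ requires noticing that the low bidder's allocation probability jumps \emph{upward} across the threshold $\bid{2}=c\bid{1}$, so a tiny upward deviation is worthwhile. The other point that needs care is bookkeeping which regime each deviation lands in --- in particular that undercutting in the main case moves a machine out of the third regime (where the low bidder is favoured) into the middle regime (flat probability $\frac1\lrg$); once the deviations are set up correctly, all the inequalities follow immediately from $\lrg>2$ and $c>1$.
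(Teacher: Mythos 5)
Your proof is correct, and the decomposition is genuinely different from the paper's. You split on which branch of \cref{claim:3dev2tasks} is active ($\bid{1}=t_1$ versus $\bid{1}=\bid{2}/c<t_1$), whereas the paper splits on where $t_2$ sits relative to $\bid{1}$ (namely $t_2<\bid{1}$ versus $\bid{1}\leq t_2<t_1$), and in the second of its cases further distinguishes $\bid{2}>t_1$ from $\bid{2}\leq t_1$, deviating machine $1$ once to $t_1$ and once to $\max\ssets{t_1,c\bid{2}}$. Your Case~A essentially coincides with the paper's first case (a deviation of machine $2$ into the middle regime), but your Case~B replaces the paper's two subcases with a single, cleaner observation: at the boundary $\bid{2}=c\bid{1}$ the low bidder's allocation probability jumps from $\tfrac{1}{\lrg}$ up to $1-\tfrac{1}{\lrg c}$, so whenever $\bid{1}<t_1$ an arbitrarily small upward nudge $\bid{1}'=\bid{1}+\varepsilon$ is strictly profitable, and the inequality $\tfrac{1}{\lrg}+\tfrac{1}{\lrg c}<1$ that you need follows directly from $\lrg>2$, $c>1$. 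This nudge argument does not even invoke the contradiction hypothesis $t_1>t_2$, so it in fact establishes the slightly stronger fact that $\bid{1}=\bid{2}/c<t_1$ can never be sustained at equilibrium, which together with \cref{claim:3dev2tasks} shows $\bid{1}=t_1$ must always hold; the paper obtains this only implicitly. The upshot is that your route is a bit shorter and conceptually sharper in Case~B (one deviation instead of two), at the cost of needing to be explicit about the degenerate $\bid{1}=0$ corner, which you correctly handle.
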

\begin{proof}
For a contradiction assume that $t_2< t_1$, and first consider the case when $t_2< \bid{1}$.
If we pick $t_2'\in\left(\frac{\bid{1}}{c},\bid{1} \right)$ we have
\begin{equation*}
C_2(\bids)= \frac{\bid{1}}{\lrg \bid{2}} \max\{\bid{2},t_2\}
= \frac{\bid{1}}{\lrg}
>  \frac{1}{\lrg}\max\{t_2', t_2\}
= C_2(\bid{1},t_2'),
\end{equation*}
meaning that the higher bidding machine would have an incentive to deviate from $\bid{2}$ to $t_2'$.

For the remaining case of $\bid{1}\leq t_2< t_1$, first note that if $t_1\leq\frac{\bid{2}}{c}$ then by \cref{claim:3dev2tasks} we would immediately derive that $\bid{1}=t_1$, which is a contradiction. Hence, by \cref{claim:1dev2tasks} we can assume that $\bid{1}=\frac{\bid{2}}{c}< t_1$. Then, if $\bid{2}>t_1$  we have that
\begin{equation*}
C_1(\bids)= \left(1-\frac{\bid{1}}{\lrg \bid{2}}\right) \max\{\bid{1},t_1\}
= \left(1-\frac{\bid{1}}{\lrg\bid{2}}\right) t_1
>  \frac{1}{\lrg}t_1
= C_1(t_1,\bid{2}),
\end{equation*}
the inequality holding because $\frac{\bid{1}}{\bid{2}}\frac{1}{\lrg}\leq\frac{1}{\lrg}<\frac{1}{2}$, and if $\bid{2}\leq t_1$ then, in the same way, for $t_1'=\max\{t_1,c\bid{2}\}$
\begin{equation*}
C_1(\bids)
>  \frac{1}{\lrg}t_1
\geq \frac{\bid{2}}{\lrg}
=\frac{\bid{2}}{\lrg t_1'}\max\{t_1',t_1\}
= C_1(t_1',\bid{2}).
\end{equation*}
\end{proof}

\begin{proof}[Proof of \cref{th:PoA_1task}]
\Cref{claim:1dev2tasks,claim:2dev2tasks,claim:3dev2tasks,claim:4dev2tasks} imply that the makespan (and thus also the social cost since we have a single task) of any allocation at equilibrium can be bounded by
\begin{align*}
\mathcal{M}(\bids) &= \left(1-\frac{\bid{1}}{\lrg \bid{2}}\right) \max\{\bid{1},t_1\}+\frac{\bid{1}}{\lrg \bid{2}}\max\{\bid{2},t_2\}\\
&\leq \max\{\bid{1},t_1\}+\frac{1}{\lrg}\bid{1} \\
&\leq \left(1+\frac{1}{\lrg}\right)t_1,
\end{align*}
where $t_1$ is the optimal makespan.

Also, it is important to mention that it can be verified that there exists at least one (pure Nash) equilibrium, e.g.\ reporting $\bid{1}=t_1$ and  $\bid{2}=\max\{\lrg c\cdot t_1, t_2\}$.
\end{proof}

\subsection{The General Case}
The algorithm for two machines (and a single task) can be naturally generalized to the case of any number of machines $n\geq 2$.
We note that the essence of the techniques and the core ideas we presented in \cref{sec:PoA2machines} carry over to the general case. So, for clarity of exposition, we only give the definition of the algorithm here and the proof can be found in \cref{append:PoAnmachines}.

To present our algorithm $\mathcal{A}_{\lrg,c}$ we first need to add some notation.
We use $\bmin$ and $\bsec$ to denote the smallest and second smallest declarations in $\bids$, and $N_{\min}$, $N_{\second}$ the corresponding sets of machine indices that make these declarations. (If $N=N_{\min}$, i.e.\ all machines make the same declaration we define $\bsec=\bmin$).
Also let $n_{\min}=|N_{\min}|$ and $n_{\second}=|N_{\second}|$.

Our main algorithm $\mathcal{A}_{\lrg,c}$ for the case of one task and $n$ machines, parameterized by $\lrg>2(n-1)$, $c>1$, is defined by the allocation probabilities $a_i$ for each machine $i\in N$ given in \cref{fig:algorithm_n_machines}.

\begin{figure}[t]
\vspace{-0.3cm}
\begin{equation*}
\begin{array}{l | c c c}
 & \negthickspace\negthickspace i\in N_{\min} & \negthickspace\negthickspace\negthickspace i\in N_{\second} & \negthickspace\negthickspace\negthickspace i\in N\negthickspace\setminus\negthickspace (N_{\min}\negthickspace\cup\negthickspace N_{\second})\\
\hline&&\\
\text{if}\;\; \bmin=\bsec &\frac{1}{n} &  \frac{1}{n} & \frac{1}{n}\\
&&\\
\text{if}\;\; \bmin<\bsec< c\cdot \bmin &\left(\frac{1}{\lrg} \right)/n_{\min}&
\left(1-\frac{1}{\lrg}\right)/n_{\second}&0\\
&&\\
 \text{if}\;\;  \bsec\geq c\cdot \bmin &\left(1-\displaystyle\sum_{k\in N\setminus N_{\min}}\frac{\bmin}{\lrg\cdot \bid{k}}\right)/n_{\min} & \frac{\bmin}{\lrg\cdot\bidi} & \frac{\bmin}{\lrg\cdot \bidi}
\end{array}
\end{equation*}
\caption{Algorithm $\mathcal A_{\lrg,c}$ for scheduling a single task to $n\geq 2$ machines, parametrized by $\lrg>2(n-1)$ and $c>1$. The first and second highest reported execution times by the machines are denoted by $\bmin$ and $\bsec$ respectively, while $N_{\min}$, $N_{\second}$ denote the corresponding sets of machine indices, and $n_{\min}$, $n_{\second}$ their cardinalities.}
\label{fig:algorithm_n_machines}
\end{figure}

As the following theorem suggests, by picking a high enough value for $\lrg$ the above algorithm can achieve an optimal performance under equilibrium:
\begin{theorem}
\label{th:PoAnmachines}
For the problem of scheduling one task without payments to $n\geq 2$ machines,
algorithm $\mathcal{A}_{\lrg}$ has a (pure) Price of Anarchy of at most $1+\frac
{n-1}{\lrg}$ (for any $\lrg>2(n-1)$).
\end{theorem}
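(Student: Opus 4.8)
The plan is to follow exactly the route of \cref{th:PoA_1task}: establish at an arbitrary pure equilibrium $\bids$ the $n$-machine analogues of \cref{claim:1dev2tasks,claim:2dev2tasks,claim:3dev2tasks,claim:4dev2tasks} for $\mathcal A_{\lrg}$, and then assemble them. The four facts to prove, in order, are: (i) $\bsec\geq c\bmin$, so that at equilibrium only the third row of \cref{fig:algorithm_n_machines} is ever in force --- hence $n_{\min}=1$, there is a unique lowest bidder $\ell$, and every $i\neq\ell$ gets the task with probability $\frac{\bmin}{\lrg\bidi}$; (ii) every machine other than $\ell$ declares at least its true cost, $\bidi\geq t_i$ for $i\neq\ell$; (iii) $\bid{\ell}=\min\{t_\ell,\bsec/c\}$, so in particular $\bid{\ell}\leq t_\ell$; (iv) $\ell$ is a genuinely fastest machine, $t_\ell\leq t_i$ for all $i$, so $\mathrm{OPT}=t_\ell$. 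Throughout I assume $\mathrm{OPT}>0$; the degenerate case $\bmin=0$ is trivial.

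Granting (i), facts (ii)--(iv) are routine adaptations of the two-machine arguments. For (ii): being in the third row, if $\bidi<t_i$ for some $i\neq\ell$, then $i$ strictly improves by reporting $t_i$, dropping its cost from $\frac{\bmin}{\lrg\bidi}t_i$ to $\frac{\bmin}{\lrg}$. For (iii): the cost of $\ell$ as a function of its own report $b$ is, inside the third row ($b\leq\bsec/c$), the single-variable quantity $(1-b\sigma)\max\{b,t_\ell\}$ with $\sigma=\sum_{k\neq\ell}\tfrac{1}{\lrg\bidk}$; since $\bidk\geq\bsec$ for all $k\neq\ell$, the assumption $\lrg>2(n-1)$ places $\bsec/c$ in the strictly increasing branch of $b\mapsto(1-b\sigma)b$, so this function decreases below $t_\ell$ and increases on $[t_\ell,\bsec/c]$, giving minimiser $\min\{t_\ell,\bsec/c\}$ (and that report stays in the third row). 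For (iv): if some $j$ had $t_j<t_\ell$, then either $t_j<\bmin$, in which case $j$ undercuts to a report in $(\bmin/c,\bmin)$ above $t_j$, becomes the unique minimum in a second-row configuration, gets probability $1/\lrg$, and thus strictly lowers its cost below $\frac{\bmin}{\lrg}$; or $t_j\geq\bmin$, which by (iii) forces $\bsec=c\bmin<c\,t_\ell$, and then $\ell$ --- who by (iii) is underbidding --- raises its report slightly into $(\bmin,\min\{t_\ell,c\bmin\})$, lands in a second-row configuration with probability $1/\lrg$ and cost $t_\ell/\lrg$, which beats its current cost $a_\ell t_\ell\geq\tfrac12 t_\ell$ (using $a_\ell\geq 1-\tfrac{n-1}{\lrg c}>\tfrac12$, again from $\lrg>2(n-1)$).

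Assembling the pieces: since a single task is always executed in full by one machine, at any equilibrium
\[
\mathcal{M}(\bids)=a_\ell\max\{\bid{\ell},t_\ell\}+\sum_{i\neq\ell}\frac{\bmin}{\lrg\bidi}\max\{\bidi,t_i\}\;\leq\;\max\{\bid{\ell},t_\ell\}+\sum_{i\neq\ell}\frac{\bmin}{\lrg}\;=\;t_\ell+\frac{n-1}{\lrg}\bmin\;\leq\;\Bigl(1+\frac{n-1}{\lrg}\Bigr)t_\ell,
\]
where the inequality uses $a_\ell\leq1$ together with (ii) (so $\max\{\bidi,t_i\}=\bidi$ for $i\neq\ell$), the middle equality uses (iii) ($\bid{\ell}\leq t_\ell$, hence $\max\{\bid{\ell},t_\ell\}=t_\ell$), and the last step uses $\bmin\leq t_\ell$; by (iv) the right-hand side is $\bigl(1+\tfrac{n-1}{\lrg}\bigr)\mathrm{OPT}$. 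As this holds for every equilibrium, the pure Price of Anarchy of $\mathcal A_{\lrg}$ is at most $1+\frac{n-1}{\lrg}$. Finally, as at the end of \cref{sec:PoA2machines}, one should exhibit an explicit equilibrium so the bound is not vacuous, e.g.\ (relabelling so $t_1=\min_i t_i$) machine $1$ reporting $\bid{1}=t_1$ and every other machine $i$ reporting $\bid{i}=\max\{\lrg c\,t_1,t_i\}$.

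I expect (i) to be the real work. For $n=2$ the forbidden configurations $\bsec<c\bmin$ are disposed of by one obvious deviation, but for general $n$ this breaks down: a machine trying to leave the two bottom positions need not reach the third row, because ties ($n_{\min}\geq2$ in the first row, $n_{\second}\geq2$ in the second) or a \emph{third}-smallest bid lying in $[\bsec,c\bmin)$ can trap it in the first or second row. The deviation must therefore be tailored: from a first-row instance, a machine in $N_{\min}$ undercuts into a second-row configuration where it is the unique minimum and pays a fraction $1/\lrg<1/n$ (using $\lrg>2(n-1)\geq n$); from a second-row instance, a machine in $N_{\second}$ overbids past the third-smallest report, landing either in a second-row configuration with probability $0$ or in a third-row configuration as a non-minimal machine with cost $\bmin/\lrg$, in each case strictly below its current cost $\geq\frac{1-1/\lrg}{n-1}\bsec$. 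These several cases, plus the zero-cost degeneracies, are the only nontrivial part; everything else mirrors \cref{sec:PoA2machines}.
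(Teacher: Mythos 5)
Your decomposition mirrors the paper's appendix proof (Claims~\ref{claim:sec_bid}--\ref{claim:t(b)}), and your sketches of facts~(ii), (iii), (iv) and the final makespan assembly are correct. The one real gap is in (i): you write that $\bsec\geq c\bmin$ means only the third row of \cref{fig:algorithm_n_machines} is in force, ``hence $n_{\min}=1$''. That implication is false. In the paper's notation $\bsec$ is the smallest declaration strictly above $\bmin$ (the parenthetical ``if $N=N_{\min}$ we define $\bsec=\bmin$'' only makes sense under this reading), so declarations like $(1,1,5)$ with $c\leq 5$ sit squarely in the third row with $n_{\min}=2$. Indeed, the third-row allocation probability for $i\in N_{\min}$ is written with a division by $n_{\min}$ precisely because $n_{\min}>1$ is possible there. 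Ruling out a third-row equilibrium with multiple minimum bidders requires its own deviation argument --- the paper's \cref{claim:n_b}: a machine $i\in N_{\min}$ can jump to $\max\{\bsec,t_i\}$, becoming a non-minimal machine with cost $\bmin/\lrg$, which strictly beats its current share $\left(1-\sum_{k\notin N_{\min}}\frac{\bmin}{\lrg\bidk}\right)\frac{\max\{\bmin,t_i\}}{n_{\min}} > \frac{1}{\lrg}\max\{\bmin,t_i\}\geq\frac{\bmin}{\lrg}$, the middle inequality using $\lrg>2(n-1)\geq n$. Your discussion of ``the real work'' in (i) covers only deviations out of first- and second-row configurations, so this case is genuinely missing; and without $n_{\min}=1$, your facts (ii)--(iv) are not even well-posed (there is no unique $\ell$), and the final bound would have to account for every machine in $N_{\min}$, whose true costs you cannot otherwise control.
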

\paragraph{Multiple Tasks}
It is not difficult to extend our single-task algorithm and the result of \cref{th:PoAnmachines} to get a task-independent, anonymous algorithm with a pure PoA of $n$ for any number of tasks $m\geq 1$: simply run $\mathcal A_L$ \emph{independently} for each job. Then, the equilibria of the extended setting correspond exactly to players not having an incentive to deviate for any task/round, and the approximation ratio of $1+\frac{n-1}{\lrg}$ with respect to the minimum cost $\min_i t_{i,j}$ at every such round $j=1,\dots,m$, guarantees optimality with respect to the social welfare and thus provides indeed a worst-case $n$-approximation for the makespan objective (see \cref{eq:makespanwelfareineq}).
The following result shows that this is in fact (asymptotically) optimal for
the natural class of anonymous\footnote{A scheduling algorithm is called \emph
{anonymous} if it does not discriminate among the players (with respect to
their identities). Formally,
any permutation of the players
results in the same permutation to the output/allocation
of the algorithm, i.e.,
for any permutation $\pi$ of the set of player
indices $\ssets{1,2,\dots,n}$ and any input $\vecc t$, it is
$a(\pi\cdot \vecc t)=\pi\cdot a(\vecc t)$, where the dot product $\pi \cdot
\vecc t$ denotes the permutation of the rows of matrix $\vecc t$ according to
$\pi$.}, task-independent algorithms:

\begin{theorem} \label{th:lower_bound_anonymous_taskind} No task-independent,
anonymous algorithm for scheduling without payments on $n$ machines can have a
(pure) Price of Anarchy better than $\frac{n}{2}-o(1)$. 
\end{theorem}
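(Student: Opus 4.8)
The plan is to exploit task-independence to reduce the whole question to single-task subgames, and then to design a cost matrix on which \emph{anonymity} forces a load-concentrating equilibrium. First, observe that if the allocation of task $j$ depends only on the declarations for task $j$, then since each machine's cost in \cref{eq:costplayer} is a sum $C_i(\bids)=\sum_{j} a_{i,j}(\bids)\max\{\bid{i,j},t_{i,j}\}$ of per-task terms and machine $i$'s declaration on task $j$ influences only the $j$-th term, a declaration profile is a pure Nash equilibrium of the $m$-task game if and only if its restriction to every task $j$ is a pure Nash equilibrium of the corresponding single-task game. Hence a bad equilibrium can be assembled task by task. Moreover, by the convexity fact underlying \cref{eq:fractintegralineq,eq:makespanwelfareineq}, the makespan of any equilibrium is at least $\max_i C_i(\bids)$; so it suffices to produce an equilibrium and a machine whose total expected workload is $\Omega(n)$ while the optimal makespan is $O(1)$.

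Second, I would take $m=n$ tasks whose finite processing times are all equal (say to $1$), so that a task can be routed "for free" (as far as the makespan is concerned) to any machine in a prescribed feasible set, and arrange the feasible sets so that machine $1$ is feasible for every task while the remaining feasibility is spread out — e.g.\ task $j$ has cost $1$ on machines $\{1,j\}$ (or, in a nested variant, on $\{1,\dots,j\}$) and a prohibitively large cost elsewhere, possibly with a small perturbation to pin down equilibrium selection. The centralized optimum then assigns task $j$ to machine $j$ and achieves makespan $1$. The crucial claim is that, whatever the anonymous single-task rule, each single-task subgame admits a pure equilibrium in which machine $1$ ends up with a constant fraction of that task; summing over the $n$ tasks gives expected workload $\Omega(n)$ on machine $1$, hence makespan $\Omega(n)$ against optimum $1$. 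The precise constant $\tfrac12$ and the $o(1)$ slack are expected to come from the fact that in the two-candidate subgames one cannot in general push machine $1$'s share all the way to $1$ but only past $\tfrac12$, together with a few boundary tasks that are "lost".

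The main obstacle — and really the heart of the proof — is this last claim: a load-concentrating pure equilibrium must exist for \emph{every} anonymous, task-independent rule, not merely for the specific $\mathcal A_{\lrg}$ of \cref{th:PoAnmachines} (whose per-task Price of Anarchy is close to $1$, so the obstruction lives entirely in the worst equilibrium of the combined instance). Anonymity is exactly what makes the argument go through: since the rule cannot discriminate by identity, the symmetry of each single-task instance means that whenever \emph{some} feasible machine can be trapped into carrying a non-negligible share of the task, the trap can be sprung on machine $1$ by a suitable choice of declarations; and since the task must be allocated to someone, the feasible machines cannot all escape simultaneously, which is what guarantees such a trapped machine exists. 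Converting this intuition into a proof requires a careful case analysis of best responses in the single-task game — showing that a configuration in which all but one feasible machine bid "high" leaves the remaining one stuck with a bounded-away-from-zero share, and that this configuration is itself consistent with equilibrium for the high bidders — and may also need a continuity/fixed-point argument to establish existence of the pure equilibrium in the first place. Once the equilibrium is in hand, the bound $\ppoa \ge \tfrac n2 - o(1)$ follows from the workload estimate of the first paragraph divided by the optimal makespan $1$.
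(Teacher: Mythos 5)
Your overall architecture is the same as the paper's: decompose the multi-task game into independent single-task subgames (correct, since task-independence makes a profile an equilibrium iff each per-task restriction is), use the instance with $m=n$ tasks where task $j$ has cost $1$ on machines $\{1,j\}$ and a huge cost elsewhere (the paper's $\vecc t^j$), and lower-bound the equilibrium makespan by machine $1$'s expected workload. In fact your last step is slightly cleaner than the paper's: you only need $\expect[\text{makespan}]\geq\expect[\text{load of machine }1]=\sum_j a_{1,j}\cdot\max\{\hat t^j_{1,j},1\}\geq(\tfrac12-o(1))\cdot n$, whereas the paper detours through a Chernoff bound, which is not actually needed for the expected-makespan statement.

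However, the ``crucial claim'' you flag as the heart of the proof --- that every anonymous single-task rule admits a pure equilibrium concentrating probability $\tfrac12-o(1)$ on machine $1$ --- is left as an acknowledged gap, and the route you sketch for it (case analysis of best responses plus a continuity/fixed-point existence argument) is both incomplete and unnecessarily heavy. The paper's argument is much shorter and avoids both: take \emph{any} pure Nash equilibrium $\hat{\vecc t}^{j}$ of the single-task game on $\vecc t^{j}$ (existence is presupposed, as is standard when bounding pure PoA). At that equilibrium, the two fast machines $1$ and $j$ must jointly receive the task with probability at least $1-\tfrac1n$, for otherwise the expected cost already exceeds $\tfrac1n\cdot M>n$, giving the claimed PoA lower bound immediately. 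Hence one of them receives probability at least $\tfrac12(1-\tfrac1n)$. The key step you are missing is that $\vecc t^{j}$ is \emph{symmetric under swapping coordinates $1$ and $j$} (both have true cost $1$), so if the heavy machine happens to be $j$, you simply transpose the reports of $1$ and $j$: by anonymity this is again a pure Nash equilibrium of the \emph{same} instance, now with machine $1$ carrying the heavy share. No best-response case analysis and no fixed-point argument is required. Note also that your ``nested variant'' with task $j$ fast on $\{1,\dots,j\}$ would destroy exactly this $1\leftrightarrow j$ symmetry, so the transposition trick would not apply to it; you should stick with the two-fast-machines instance.
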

\begin{proof} 
Fix an algorithm that allocates tasks by running an anonymous
single-task algorithm $\mathcal A_j$, independently for each task $j$. Each algorithm
$\mathcal A_j$ takes as input a
declared cost vector $\hat{\vecc t}= (\hat t_1, \dots, \hat t_n)^\top$  by the
machines, where $\hat t_i$ is the report of machine $i$ for the task, $i=1,\dots,n$. Of
course, there is also an underlying \emph{true} cost vector $\vecc t$. We will
study cost vectors $\vecc t^{j}$ under which the corresponding task $j$ has a high execution time $M>n^3$ for all
machines except two of them, namely $i=1$ and $i=j$. Formally, for any
$j=1,\dots,n$ define:
$$  t^{j}_i= \begin{cases} 1, &\text{if}\;\;
i=1\;\;\text{or}\;\;i=j,\\ M, &\text{otherwise}. \end{cases} $$
Let $\hat{\vecc
t}^{j}$ be a (pure) Nash equilibrium strategy vector under algorithm
$\mathcal A_j$, when
the true execution costs are $\vecc t^j$.

First
we notice that we can safely assume that, under any such true instance $\vecc
t^j$, the probability of $\mathcal A_j$
allocating the
task to one of the fast
machines, at equilibrium, has to be at least
$a_1(\hat{\vecc t}^{j})+a_j(\hat{\vecc t}^{j})\geq 1-\frac{1}{n^2}$.
Otherwise the expected makespan would be at least $
\left(1-\frac{1}{n^2}\right)\cdot 1 + \frac{1}{n^2}\cdot M>n+1-\frac{1}{n^2}>\frac
{n}{2}$ (while the optimal one is $1$, by assigning the task to machine $i=1$)
thus easily proving the Price of Anarchy lower bound of the theorem.
So, at equilibrium $\hat{\vecc t}^{j}$, one of machines $i=1$ and $i=k$ has to
receive the job with probability at least $\frac{1}{2}\left(1-\frac{1}
{n^2}\right)=\frac{1}
{2}-\frac{1}{2n^2}$. Assume that the machine with this property is
$i=1$; otherwise, due
to the anonymity of algorithm $\mathcal A_j$, it is not difficult to see that the
vector that results from $\hat{\vecc t}^{j}$ by swapping its entries between
positions $i=1$ and $i=j$, and keeping all other machines fixed, is also a Nash
equilibrium and furthermore has the desired property.

Now we are ready to construct the bad instance for our task-independent,
anonymous algorithm. Consider $m=n$ jobs, job $j=1,2,\dots,n$ having true execution time
vector $\vecc t^{j}$. By the previous analysis for the single-task
algorithms $\mathcal A_j$, $\hat{\vecc t}=(\hat{\vecc t}^{1},\hat{\vecc t}^
{2},\dots,\hat
{\vecc t}^{n})$ is a Nash equilibrium for our multi-task algorithm, and the
probability
that machine $i=1$ is the one allocated task $j$ is at least
$\frac{1}{2}-\frac{1}{2n^2}$, for every step $j$. So, by
utilizing a standard Chernoff bound\footnote{
Use the tail inequality
$\probability{X\leq (1-\delta)\mu}\leq e^{-\frac{\delta^2} {2}\mu}$
where $X$ is the number of tasks that are allocated to machine $i=1$,
$\mu=n\left(\frac{1}{2}-\frac{1}{2n^2}\right)$, and choose $\delta\leq n^
{-\frac{1}{3}}$.
}
it is straightforward to see that the probability that machine $i=1$ gets more
than $(1-o(1))\cdot n\left(\frac{1}{2}-\frac{1}{2n^2}\right)=(1-o(1))\left(\frac{n}{2}-o(1)\right)$ jobs in total is at least $1-o(1)$,
resulting to an expected makespan of at least $\frac{n}{2}-o(1)$. The expected
optimal makespan is $1$: just allocate task $j$ to machine $j$,
for all $j=1,2,\dots,n$.
\end{proof}

Although anonymity is definitely a very natural assumption in the scheduling
domain (see e.g.\ the discussion in \citep{ADL12}), the above results do not
rule out the possibility of an improved Price of Anarchy if the requirements of
\cref{th:lower_bound_anonymous_taskind} are relaxed, and as a matter of fact we
find the design of such algorithms, that will take into consideration
non-trivial correlations between tasks and among machines, to be one of the most
interesting and challenging directions for future work. However keep in mind
that, as \cref{th:lower_bound_taskind} in \cref{append:lower_bound_taskind}
demonstrates, just relaxing the anonymity constraint would not
suffice in order to hope for constant-factor performance.

\section{Truthful Mechanisms}
\label{sec:truthfulness}
In this section we turn our attention to truthful algorithms for many tasks and provide a mechanism that achieves approximation ratio $n$, almost matching the $(n+1)/2$ known lower bound on truthfulness \citep{K14}. The best known ratio before our work was $n(n+1)/2$, achieved by running the algorithm of \citet{K14} independently for each task. Unfortunately this guarantee turns out to be tight for the particular algorithm
(see \cref{sec:tight_lower_K14} for a bad instance),
thus here we have to devise more involved, \emph{non} task-independent mechanisms.

\subsection{The LP mechanism}
It is a known fact that the LP relaxation of a problem can be a useful tool for designing mechanisms (both randomized and fractional). What the LP mechanism essentially does is that it forces all machines to have equal expected (minimum) cost. This aligns the goal of the designer with the objective of the machines, thus making it easier for truthfulness to be guaranteed; similar in spirit is the {\sc{Equal Cost}} mechanism of \cite{FT13}. We recall that the LP relaxation for the scheduling problem is as shown in \cref{fig:LP}.

\begin{figure}[ht]
\vspace{-0.5cm}
\begin{align*}
& & \text{minimize} \;\; \mu & & \\
&\forall j: &\sum_{i=1}^n \alpha_{i,j}&=1&&\textrm{ (each task is allocated entirely)}\\
&\forall i: &\mu-\sum_{j=1}^m \alpha_{i,j}t_{i,j}&\geq 0&&\textrm{ (the cost of each machine does not exceed makespan)}\\
&\forall i,j: &\alpha_{i,j}&\geq0&&\textrm{ (the allocation probabilities are non-negative)}
\end{align*}
\caption{The LP relaxation for the scheduling problem. Our LP mechanism is defined by an optimal solution $\alpha_{i,j}^{\text{LP}}(\vecc t)$ to this program.}
\label{fig:LP}
\end{figure}

We denote an \emph{optimal} solution\footnote{Notice that although $\mu_{\vecc t}^{\text{LP}}$ is unique, there might be various allocation fractions $\alpha_{i,j}$ that give rise to the optimal makespan $\mu_{\vecc t}^{\text{LP}}$, in which case we can choose an arbitrary one for $\alpha^{\text{LP}}_{\vecc t}$, e.g.\ take the lexicographically smaller.} to the above LP by $\alpha^{\text{LP}}(\vecc t)$, and the optimal objective value by $\mu^{\text{LP}}_{\vecc t}$ (dropping the $\text{LP}$ superscript whenever this is clear from the context). The vector $\alpha^{\text{LP}}(\vecc t)$ can be straightforwardly interpreted as allocation probabilities or allocation fractions giving rise to a randomized or a fractional mechanism, respectively. We refer to the corresponding mechanisms as the LP randomized and the LP fractional mechanism. In \cref{thm:lp_truthful} we show that both mechanisms are truthful, hence, we can think of $\mu_{\vecc t}^{\text{LP}}$ as corresponding to the maximum (expected) cost/workload perceived by any machine.

It is a simple observation that in an optimal solution the workload must be fully balanced among all machines and that $\mu^{\text{LP}}$ can only increase when all execution times increase, i.e.\ $\mu_{\vecc t}^{\text{LP}}\leq \mu_{\vecc t'}^{\text{LP}}$ for $\vecc t \leq \vecc t'$ (pointwise).

Note that the proof of \cref{thm:lp_truthful} is identical in both cases where the $\alpha$ corresponds to fractions or allocation probabilities. Hence, the result holds for both the LP randomized and the LP fractional mechanisms. Note that a strategy is (weakly) dominant for a machine if the strategy will result in a cost at most equal to the cost of any other strategy, for every choice of strategies/declarations of the other machines.

\begin{theorem}\label{thm:lp_truthful}
Under the LP (fractional or randomized) mechanism, truthfully reporting the execution times is a (weakly) dominant strategy for every machine.
\end{theorem}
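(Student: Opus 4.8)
The plan is to reduce everything to the two properties of the LP optimum already isolated above: (i) at an optimal solution the workload is perfectly balanced, so that every machine $k$ has reported cost exactly $\mu^{\text{LP}}$; and (ii) $\mu^{\text{LP}}$ is monotone, i.e.\ $\mu^{\text{LP}}_{\vecc t}\le\mu^{\text{LP}}_{\vecc t'}$ whenever $\vecc t\le\vecc t'$ pointwise (immediate, since any feasible solution for $\vecc t'$ stays feasible for $\vecc t$). The key device is an auxiliary ``effective'' cost vector that sandwiches the two workloads we must compare.

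Concretely, fix a machine $i$ with true costs $\vecc t_i$, fix the others' declarations $\hat{\vecc t}_{-i}$, and let $\hat{\vecc t}_i$ be an arbitrary (possibly untruthful) report; write $\hat{\vecc t}=(\hat{\vecc t}_i,\hat{\vecc t}_{-i})$. We want $C_i(\vecc t_i,\hat{\vecc t}_{-i})\le C_i(\hat{\vecc t})$. Define $\vecc s$ by $s_{i,j}=\max\{\hat t_{i,j},t_{i,j}\}$ and $s_{k,j}=\hat t_{k,j}$ for $k\ne i$; then $\vecc s\ge(\vecc t_i,\hat{\vecc t}_{-i})$ and $\vecc s\ge\hat{\vecc t}$, both pointwise. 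For the left side, reporting truthfully makes machine $i$'s realized workload equal to its reported LP-workload, so property (i) gives $C_i(\vecc t_i,\hat{\vecc t}_{-i})=\mu^{\text{LP}}_{(\vecc t_i,\hat{\vecc t}_{-i})}$, which by (ii) is at most $\mu^{\text{LP}}_{\vecc s}$. For the right side, note that $\alpha^{\text{LP}}(\hat{\vecc t})$ still obeys the cost-independent allocation constraints of the LP, hence is feasible for the instance with costs $\vecc s$ with objective value $\max_k\sum_j\alpha^{\text{LP}}_{k,j}(\hat{\vecc t})\,s_{k,j}$; for $k=i$ this summand is exactly $C_i(\hat{\vecc t})$, while for $k\ne i$ it equals $\sum_j\alpha^{\text{LP}}_{k,j}(\hat{\vecc t})\,\hat t_{k,j}=\mu^{\text{LP}}_{\hat{\vecc t}}$ by (i), and since $s_{i,j}\ge\hat t_{i,j}$ also forces $C_i(\hat{\vecc t})\ge\sum_j\alpha^{\text{LP}}_{i,j}(\hat{\vecc t})\,\hat t_{i,j}=\mu^{\text{LP}}_{\hat{\vecc t}}$, that maximum is $C_i(\hat{\vecc t})$. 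Optimality of $\mu^{\text{LP}}_{\vecc s}$ then yields $\mu^{\text{LP}}_{\vecc s}\le C_i(\hat{\vecc t})$, and chaining the two estimates gives $C_i(\vecc t_i,\hat{\vecc t}_{-i})\le\mu^{\text{LP}}_{\vecc s}\le C_i(\hat{\vecc t})$. The argument is insensitive to whether $\alpha^{\text{LP}}$ is read as allocation probabilities or as fractions, which is precisely why it establishes truthfulness of both the randomized and the fractional LP mechanism in one stroke.

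I do not expect a genuine obstacle; the points needing care are the two invocations of the balancing property (i) — for which I would first record the short lemma that in any LP optimum no machine is underloaded, since otherwise one could shift an infinitesimal piece of a positively-allocated task off a bottleneck machine onto the underloaded one and strictly decrease the makespan — and the bookkeeping that a deviation of machine $i$ alters its row both inside $\hat{\vecc t}$ and, through the $\max$, inside the realized cost. The vector $\vecc s$ is engineered exactly so that a single monotonicity step absorbs both effects, so the crux is really just setting it up and verifying it dominates the right two vectors.
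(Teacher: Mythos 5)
Your proposal is correct and is essentially the paper's own proof: your auxiliary vector $\vecc s$ is exactly the paper's $\vecc t^{\max_i}$, and the chain $C_i(\vecc t_i,\hat{\vecc t}_{-i})=\mu^{\text{LP}}_{(\vecc t_i,\hat{\vecc t}_{-i})}\leq\mu^{\text{LP}}_{\vecc s}\leq C_i(\hat{\vecc t})$ via workload balancing, monotonicity of $\mu^{\text{LP}}$, and feasibility of $\alpha^{\text{LP}}(\hat{\vecc t})$ at the instance $\vecc s$ is the same argument written in the opposite order.
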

\begin{proof}
Recall that $\vecc t$ and $\bids$ denote the true and (some) declared execution
times, respectively, of the machines.
Fix some machine $i$ and define matrix $\vecc t^{\max_i}$ as follows: row $i$, $\vecc t^{\max_i}_i$, is  the vector of point-wise maxima between true and declared times for machine $i$, that is
$$
\vecc t^{\max_i}_{i}=(\max\{\hat t_{i,1},t_{i,1}\}, \max\{\hat t_{i,2},t_{i,2}\}, \ldots, \max\{\hat t_{i,m},t_{i,m}\}),
$$
while every other row $k\neq i$ is $\vecc t^{\max_i}_k=\vecc{\hat{t}}_k$, i.e.\ $\vecc t^{\max_i}=(\vecc t^{\max_i}_i,\bids_{-i})$. Seen as a matrix of declarations,  $\vecc t^{\max_i}$ corresponds to machine $i$'s deviation from $\bids_i$ to $\vecc t^{\max_i}_i$, while each other machine makes declarations according to $\bids$.
Then we can derive the following:
$$
\sum_{j=1}^{m} \alpha_{k,j}(\hat{\vecc{t}}) t^{\max_i}_{k,j}=\sum_{j=1}^{m} \alpha_{k,j}(\hat{\vecc{t}}) \hat t_{k,j}=\mu_{\hat{\vecc t}}=\sum_{j=1}^{m} \alpha_{i,j}(\hat{\vecc{t}}) \hat t_{i,j}\leq \sum_{j=1}^{m} \alpha_{i,j}(\hat{\vecc{t}}) t^{\max_i}_{i,j}
$$
and thus from the optimality of the LP solutions it must be that
$$\mu_{\vecc t^{\max_i}}\leq \max_{l=1,\dots,n}\sset{\sum_{j=1}^{m} \alpha_{l,j}(\hat{\vecc{t}}) t^{\max_i}_{l,j}}=\sum_{j=1}^{m} \alpha_{i,j}(\hat{\vecc{t}}) t^{\max_i}_{i,j}.$$
Bringing everything together and taking into consideration that $(\vecc t_i,\hat{\vecc t}_{-i})\leq \vecc{t}^{\max_i}$ we get
$$
\negthickspace\negthickspace\negthickspace\negthickspace
C_i(\hat{\vecc t})
=\sum_{j=1}^{m} \alpha_{i,j}(\hat{\vecc{t}}) \max\sset{\hat t_{i,j}, t_{i,j}}
\geq \mu_{\vecc t^{\max_i}}
\geq \mu_{(\vecc t_i,\hat{\vecc t}_{-i})}
=\sum_{j=1}^{m} \alpha_{i,j}(\vecc t_i,\hat{\vecc t}_{-i}) t_{i,j}
=C_i(\vecc t_i,\hat{\vecc t}_{-i}),
$$
which shows that indeed, whatever the declarations of the other machines $\hat{\vecc t}_{-i}$, machine $i$ is always (weakly) better off by truthfully reporting $\vecc t_i$.
\end{proof}
\Cref{thm:lp_truthful} gives rise to the following two results.
\begin{theorem}
The LP fractional mechanism has approximation ratio $1$ for the fractional scheduling problem without money, for any number of machines and tasks.
\end{theorem}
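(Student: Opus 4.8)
The plan is to lean entirely on \cref{thm:lp_truthful}. Since truthful reporting is a (weakly) dominant strategy for every machine, the approximation ratio of the LP fractional mechanism is, by the definition of approximation ratio for truthful mechanisms in \cref{sec:model}, evaluated at the profile where every machine declares its true costs, i.e.\ $\bids=\vecc t$. At that profile the point-wise maxima appearing in the cost expressions collapse, $\max\{\hat t_{i,j},t_{i,j}\}=t_{i,j}$, so the makespan produced by the mechanism is exactly
$$\mathcal M^f(\vecc t)=\max_{i=1,\dots,n}\sum_{j=1}^m \alpha^{\text{LP}}_{i,j}(\vecc t)\,t_{i,j}.$$

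Next I would identify this quantity with $\mu^{\text{LP}}_{\vecc t}$. This is immediate from the LP of \cref{fig:LP}: it minimizes $\mu$ subject to $\mu\ge\sum_j \alpha_{i,j}t_{i,j}$ for every $i$, so at an optimal solution $\mu^{\text{LP}}_{\vecc t}=\max_i\sum_j \alpha^{\text{LP}}_{i,j}(\vecc t)\,t_{i,j}$ (equivalently, invoke the already-noted fact that the workload is fully balanced at the optimum; note also that the particular tie-breaking rule among optimal solutions is irrelevant, since every optimal solution attains the same makespan $\mu^{\text{LP}}_{\vecc t}$). Finally, the feasible region of the LP is precisely the set of all fractional allocations — each task assigned entirely, non-negative fractions — and its objective is precisely the fractional makespan; hence $\mu^{\text{LP}}_{\vecc t}$ is by definition the optimal makespan achievable by any centralized fractional allocation on instance $\vecc t$, the benchmark against which the approximation ratio is measured. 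Chaining the three identities, the makespan of the mechanism on truthful reports equals the centralized fractional optimum for every instance and every $n,m$, so the approximation ratio is $1$.

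There is essentially no hard step here: the entire content is the observation that the LP the mechanism solves \emph{is} the fractional scheduling problem, together with the fact that truthfulness lets us restrict attention to the truthful input profile. The only thing to be careful about is bookkeeping with the definitions — checking that $\mathcal M^f$ evaluated at truthful reports coincides with the LP objective value, and that ``approximation ratio'' for a truthful mechanism is indeed measured at the truthful profile as set up in \cref{sec:model}. I would also note in passing that the monotonicity fact $\mu^{\text{LP}}_{\vecc t}\le\mu^{\text{LP}}_{\vecc t'}$ for $\vecc t\le\vecc t'$ is not needed for this particular statement, even though it is the natural companion observation (and is used elsewhere, e.g.\ in the proof of \cref{thm:lp_truthful}).
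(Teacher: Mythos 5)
Your argument is correct and matches the paper's intent: the paper presents this theorem as an immediate consequence of \cref{thm:lp_truthful} without writing out a proof, and your chain of identities (truthfulness lets you evaluate at $\bids=\vecc t$, at which point the LP mechanism's makespan is exactly $\mu^{\text{LP}}_{\vecc t}$, which is by definition the optimal fractional makespan) is precisely the elided reasoning. You are also right that the tie-breaking rule among optimal LP solutions is immaterial here and that the monotonicity of $\mu^{\text{LP}}$ is not needed for this statement.
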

As discussed in \cref{sec:model}, by \eqref{eq:fractintegralineq} we know that the above performance guarantee can deteriorate at most by a factor of $n$ when we use the fractions as allocation probabilities for the integral case:
\begin{theorem}
\label{th:lp_integral_n}
The LP randomized mechanism has approximation ratio at most $n$ for (integrally) scheduling any number of tasks to $n$ machines without money.
\end{theorem}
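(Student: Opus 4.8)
The plan is to combine the truthfulness of the LP mechanism (\cref{thm:lp_truthful}) with the generic fractional-to-integral gap \eqref{eq:fractintegralineq} and the fact that the LP of \cref{fig:LP} is a relaxation of the integral scheduling problem. Since the mechanism is truthful, its approximation ratio is evaluated on the outcome obtained when every machine $i$ reports $\vecc t_i$; there the mechanism allocates task $j$ to machine $i$ with probability $a_{i,j}=\alpha^{\text{LP}}_{i,j}(\vecc t)$, and the quantity to be bounded is the expected integral makespan $\mathcal M(\vecc t)$ induced by these probabilities.

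First I would note that with truthful reports each term $\max\{\hat t_{i,j},t_{i,j}\}$ equals $t_{i,j}$, so the fractional makespan of the allocation $\alpha^{\text{LP}}(\vecc t)$ is $\mathcal M^f(\vecc t)=\max_{i}\sum_{j}\alpha^{\text{LP}}_{i,j}(\vecc t)\,t_{i,j}$, which by the balancedness of an optimal LP solution is exactly the LP value $\mu^{\text{LP}}_{\vecc t}$. Second, any optimal centralized integral schedule is a feasible $0/1$ point of the LP, so, the LP being a minimization relaxation, $\mu^{\text{LP}}_{\vecc t}\le \mathrm{OPT}(\vecc t)$, where $\mathrm{OPT}(\vecc t)$ denotes the optimal integral makespan; this already re-proves that the LP fractional mechanism has ratio $1$ and, more to the point, gives $\mathcal M^f(\vecc t)=\mu^{\text{LP}}_{\vecc t}\le\mathrm{OPT}(\vecc t)$.

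Finally I would apply the right inequality of \eqref{eq:fractintegralineq} to the common fraction/probability vector $\vecc a=\alpha^{\text{LP}}(\vecc t)$: since $\mathcal M(\vecc t)=\expect[\max_i Y_i]\le\sum_i\expect[Y_i]\le n\max_i\expect[Y_i]$ for $Y_i=\sum_j A_{i,j}t_{i,j}$ with $A_{i,j}\sim\mathrm{Bernoulli}(\alpha^{\text{LP}}_{i,j}(\vecc t))$ and $\expect[Y_i]=\sum_j\alpha^{\text{LP}}_{i,j}(\vecc t)\,t_{i,j}\le\mu^{\text{LP}}_{\vecc t}$, we conclude $\mathcal M(\vecc t)\le n\,\mu^{\text{LP}}_{\vecc t}\le n\cdot\mathrm{OPT}(\vecc t)$, which is the claimed bound. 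There is no genuinely hard step here: the whole argument is a short composition of \cref{thm:lp_truthful}, \eqref{eq:fractintegralineq}, and the relaxation bound. The only point needing a little care is to make sure the object being compared against $\mathrm{OPT}$ is the expected makespan of the correctly defined randomized rounding of $\alpha^{\text{LP}}$ under truthful play, where the factor-$n$ loss is genuinely unavoidable in general because independently rounding a fractional schedule may concentrate several jobs on a single machine.
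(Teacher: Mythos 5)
Your proposal is correct and takes essentially the same route as the paper: the paper deduces \cref{th:lp_integral_n} directly from the ratio-$1$ guarantee of the LP fractional mechanism together with the factor-$n$ fractional-to-integral gap in \eqref{eq:fractintegralineq}, leaving implicit the observation (which you make explicit) that the LP optimum lower-bounds the integral optimum. Your chain $\mathcal M(\vecc t)\le n\,\mathcal M^f(\vecc t)=n\,\mu^{\text{LP}}_{\vecc t}\le n\cdot\mathrm{OPT}(\vecc t)$ is exactly the argument the paper compresses into a single sentence.
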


\subsection{The Proportional Mechanism}
In this section we briefly consider the proportional mechanism which allocates to each machine $i$ a $t_i^{-1}/\sum_{k=1}^n t_k^{-1}$ fraction of the task or probability of getting the task, respectively, depending on whether we consider the randomized or the fractional variant. In \citep{K14} it was shown that this algorithm is truthful and that its approximation ratio for randomized allocations of a single task is $n$. With the following theorem we wish to stress the difference between fractional and (randomized) integral allocations. The theorem is about the fractional case and proves the optimality of the proportional mechanism for scheduling one task without payments.

\begin{theorem}\label{thm:proportional}
The proportional mechanism has an optimal approximation ratio of $1$ for the \emph{fractional} scheduling problem of a single task. For $m$ tasks the approximation ratio increases to at least $m$.
\end{theorem}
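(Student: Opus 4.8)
The plan is to treat the two assertions separately. For the single-task upper bound I would first invoke the truthfulness of the proportional mechanism, shown in \citep{K14} (the argument there is insensitive to whether the allocation values are read as probabilities or as fractions), so that its performance is measured at the truthful profile $\bids=\vecc t$. Assuming all $t_i>0$ --- the case where some $t_i=0$ being trivial, as then both the mechanism and the optimum attain makespan $0$ --- the mechanism gives machine $i$ the fraction $\alpha_i=t_i^{-1}/\sum_k t_k^{-1}$, hence a cost $\alpha_i t_i=\bigl(\sum_k t_k^{-1}\bigr)^{-1}$ that is the \emph{same} for every machine; thus its fractional makespan is exactly $\bigl(\sum_k t_k^{-1}\bigr)^{-1}$. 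To see this is optimal, take any feasible allocation with fractions $\beta_i$ and makespan $\mu$: then $\beta_i t_i\le\mu$, i.e.\ $\beta_i\le\mu/t_i$, and summing over $i$ with $\sum_i\beta_i=1$ gives $1\le\mu\sum_i t_i^{-1}$, so $\mu\ge\bigl(\sum_i t_i^{-1}\bigr)^{-1}$. Hence the optimum equals the value the mechanism already achieves, so the approximation ratio is $1$, which is trivially best possible.

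For the $m$-task lower bound I would exhibit a family of instances with $n=m$ machines and $m$ tasks, parametrized by a large constant $M$: set $t_{j,j}=1$ and $t_{i,j}=M$ for $i\ne j$. Running the proportional rule independently on each task allocates to machine $j$ a fraction $\frac{M}{M+m-1}$ of task $j$ and a fraction $\frac{1}{M+m-1}$ of every other task, so a short computation shows that every machine incurs the same cost $\frac{M}{M+m-1}\cdot 1+(m-1)\cdot\frac{1}{M+m-1}\cdot M=\frac{mM}{M+m-1}$, which is therefore the makespan of the mechanism. The optimum is $1$, by assigning task $j$ entirely to machine $j$. Letting $M\to\infty$ drives the ratio to $m$, so the approximation ratio is at least $m$.

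I do not expect a genuine obstacle in either part; the only points needing a little care are the appeal to truthfulness to pin down the profile at which the ratio is evaluated in the first part (and the degenerate zero-cost case), and phrasing the second part as a supremum over $M$ rather than as a maximum that is actually attained.
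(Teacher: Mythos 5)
Your proof is correct and takes essentially the same approach as the paper: compute that the proportional mechanism equalizes costs at $\bigl(\sum_i t_i^{-1}\bigr)^{-1}$ for a single task and match this to the fractional optimum, then use the same (up-to-scaling) bad instance for $m$ tasks. The only difference is that you lower-bound the optimum by summing the constraints $\beta_i\le\mu/t_i$ over $i$, whereas the paper argues that an optimal fractional allocation must balance all machines' workloads; both yield the same value, and your explicit appeal to the truthfulness of the proportional mechanism and the degenerate zero-cost case supplies a bit of rigor the paper leaves implicit.
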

\begin{proof}
First consider the case of a single task.
Under the proportional mechanism each machine $i$ is allocated an $\alpha_i=t_i^{-1}/\sum_{k=1}^n t_k^{-1}$ fraction of the task and executes it for time $t_i$. Hence, all machines have the same execution time (makespan) equal to
\begin{equation*}
\mathcal{M}^f = \frac{1}{\sum_{i} t_i^{-1}}.
\end{equation*}
In the optimal fractional allocation, all machines will have the same execution
cost, otherwise we could remove an $\varepsilon$ fraction of the task from a machine with high makespan and allocate it to a machine with small workload, hence reducing the makespan. This implies that for the optimal allocation fractions $\{\alpha^*_i\}_{i=1}^n$ it holds that $t_1 \alpha_1^*=t_2 \alpha_2^*= \cdots =t_n \alpha_n^*$. Thus, for any machine $i$ it  is $\alpha^*_i=\frac{t_1}{t_i}\alpha^*_1$, and by the fact that $\sum_i \alpha^*_i =1$ we get that $\sum_i t_i^{-1}=\frac{1}{\alpha^*_1 t_1}$ and so
\begin{equation*}
\mathcal{M}^f=\alpha^*_1 t_1 =\text{OPT}^f.
\end{equation*}

However the above result does not generalize to the case of many tasks, where the proportional mechanism is run independently for each task. First, it is easy to see that this independence preserves truthfulness. Secondly, it also preserves the optimality of the proportional mechanism with respect to social welfare. But regarding makespan, we will prove a lower bound of $m$.
Consider the following instance with $m$ tasks and $m$ machines. Every machine $i$ can execute all tasks in time $1$, except from the $i$-th task that can be run very quickly in time $M^{-1}$, where $M\gg 1$. Formally,
$$
t_{ij}=
\begin{cases}
M^{-1},& j=i\\
1, & j\neq i.
\end{cases}
$$
Then, the proportional mechanism computes allocation fractions
$$
\alpha_{ij}=
\begin{cases}
\frac{M}{M+m-1},& j=i\\
\frac{1}{M+m-1}, & j\neq i,
\end{cases}
$$
which results to a makespan of
$$
\sum_{j=1}^m\alpha_{ij}t_{ij}=M^{-1}\frac{M}{M+m-1}+(m-1)\frac{1}{M+m-1}=\frac{m}{M+m-1}.
$$
On the other hand, the allocation that assigns each task $j$ to its fastest
machine $j$, i.e.\ $\alpha_{ij}=1$ for $i=j$, results to a makespan of $M^{-1}$,
giving an approximation ratio lower bound of
$
\frac{Mm}{M+m-1}\to m
$
as $M\to\infty$.
\end{proof}

\section{Price of Stability and Mixed Equilibria}
\label{sec:PoS}

In this section we attempt a more optimistic approach regarding the problem of scheduling without payments. The Price of Stability can be used to measure the best possible performance of a given mechanism in practice, i.e., it can provide some indication on what is its potential power or, in other words, what is the best we can hope for given a specific mechanism. There are settings where this can be very useful, e.g. if the context allows the mechanism to suggest a specific strategy profile to the players, and the players then decide if it is in their best interest to follow the suggested strategies.

We consider the benchmark of the \emph{best} (mixed Nash) equilibrium and prove that the following, most natural greedy algorithm can achieve optimality: allocate each task independently to the machine declaring the minimum cost (breaking ties arbitrarily).

\begin{theorem}\label{thm:greedy_pos}
The Price of Stability of the Greedy algorithm is $1$ for scheduling without money any number of tasks to any number of machines.
\end{theorem}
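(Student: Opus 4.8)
The plan is to exhibit, for every instance, a \emph{mixed} Nash equilibrium of the Greedy game whose makespan equals the optimal makespan $\text{OPT}$; since no allocation can beat $\text{OPT}$ this yields $\text{PoS}\leq 1$, hence $=1$. The reason one is forced to work with mixed equilibria (and the reason the section pairs PoS with mixed equilibria) is the following simple observation: under Greedy the unique lowest bidder on a task can always shed that task --- and strictly lower her cost whenever the task is not free for her --- merely by raising her declaration above some other machine's. So essentially no pure profile in which a machine is loaded with a positively-costed task can be an equilibrium, and mixing must be used to make the machine designated to receive a task \emph{indifferent} to her own bid, with the remaining machines threatening via heavy-tailed, unbounded declarations.

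I would first fix an optimal integral assignment, i.e.\ a map $\sigma:\{1,\dots,m\}\to\{1,\dots,n\}$ with $\max_i\sum_{j:\sigma(j)=i}t_{i,j}=\text{OPT}$, and abbreviate $\tau_j:=t_{\sigma(j),j}$ (for $n=1$ the unique feasible allocation is optimal and trivially an equilibrium, so assume $n\geq 2$). Then I would consider the strategy profile, all choices independent across machines and tasks: machine $\sigma(j)$ bids $\tau_j$ deterministically; if $\tau_j>0$, every machine $i\neq\sigma(j)$ draws her task-$j$ bid from the Pareto-type law on $(\tau_j,\infty)$ with tail $\probability{\hat t_{i,j}>b}=(\tau_j/b)^{1/(n-1)}$ for $b\geq\tau_j$; and if $\tau_j=0$, every such $i$ bids $1$. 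Since every machine other than $\sigma(j)$ then bids strictly above $\tau_j$, machine $\sigma(j)$ is the strict minimum bidder on task $j$, so the realized allocation is deterministically $\sigma$ (no ties occur, whatever tie-breaking Greedy uses); as $\max\{\hat t_{\sigma(j),j},t_{\sigma(j),j}\}=\tau_j$, the makespan of this profile is exactly $\max_i\sum_{j:\sigma(j)=i}t_{i,j}=\text{OPT}$.

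The remaining work is to verify that the profile is a Nash equilibrium. Because Greedy allocates each task independently and a machine's cost is additive over tasks, each machine's best response decomposes into a best response per task, so fix a task $j$. For $i\neq\sigma(j)$: bidding strictly above $\tau_j$ guarantees $i$ loses task $j$ and pays $0$, which is the smallest cost she can possibly incur on task $j$; her prescribed bid lies strictly above $\tau_j$, hence is a best response. For $i=\sigma(j)$ with $\tau_j>0$: the $n-1$ independent tails multiply, so if $\sigma(j)$ bids $b$ she wins with probability $((\tau_j/b)^{1/(n-1)})^{n-1}=\tau_j/b$ for $b\geq\tau_j$ and with probability $1$ for $b\leq\tau_j$, i.e.\ with probability $\tau_j/\max\{b,\tau_j\}$, so her expected cost equals $\frac{\tau_j}{\max\{b,\tau_j\}}\cdot\max\{b,\tau_j\}=\tau_j$ for \emph{every} bid $b$ --- she is perfectly indifferent, and in particular bidding $\tau_j$ is optimal. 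For $i=\sigma(j)$ with $\tau_j=0$: bidding $0$ costs $0$, the minimum. This establishes the equilibrium, and together with the makespan computation above gives $\text{PoS}=1$.

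I expect the only genuinely delicate point to be the choice of exponent $1/(n-1)$ in the losers' tails: it is exactly what makes the product of the $n-1$ survival functions equal $\tau_j/b$, which is what renders $\sigma(j)$'s expected cost constant in her bid. A lighter tail would let $\sigma(j)$ profitably escape task $j$ by bidding very large; a heavier one still works (and in fact gives $\sigma(j)$ a strict incentive to bid low), so no further tuning is needed. I would also note in passing that these mixed strategies have unbounded support --- indeed infinite expectation when $n=2$ --- which is harmless, since the losing machines' realized cost is identically $0$ and $\sigma(j)$'s cost is always evaluated against her own fixed finite bid.
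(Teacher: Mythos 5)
Your proposal is correct and matches the paper's own argument essentially step for step: both construct the same mixed equilibrium in which the designated winner $\sigma(j)=i^*_j$ bids her true cost $\tau_j=T_j$ deterministically while every other machine mixes over $(\tau_j,\infty)$ with the Pareto tail $\probability{\hat t>b}=(\tau_j/b)^{1/(n-1)}$ (equivalently $F_j(x)=1-\sqrt[n-1]{T_j/x}$), and both verify equilibrium by observing that the winner's expected cost is pinned at $\tau_j$ for every possible bid (the paper phrases this as a lower-bound ``Fact,'' you compute it exactly, but these are the same calculation). The only immaterial differences are that the paper states the slightly stronger claim that \emph{every} feasible allocation arises at such an equilibrium, while you specialize to an optimal one, and that you spell out the degenerate $\tau_j=0$ case explicitly.
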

\begin{proof}
We will prove the stronger statement that every feasible (integral) allocation of $m$ tasks to $n$ machines can arise at some (mixed) Nash equilibrium of the Greedy mechanism, from which the theorem immediately follows. First we observe the following, not difficult to prove fact:
\begin{fact}
Fix some task and a nonnegative constant $T$. If all but one machine $i^*$ play the mixed strategy of declaring (independently) a value $x\in [T,\infty)$ with probability distribution $F(x)=1-\sqrt[n-1]{T/x}$, and the remaining machine $i^*$ declares deterministically any value $x^*>T$, then $i^*$'s declared cost will be the minimum among all declarations for the task with probability $(1-F(x^*))^{n-1}$. Thus, under the Greedy mechanism, machine $i^*$ will incur an expected cost of at least $x^*(1-F(x^*))^{n-1}=T$ for executing the particular task.
\end{fact}

Consider any true instance $\vecc t$ of the scheduling without payments problem and fix a particular allocation $A$ of tasks to machines. For each task $j$ let $i^*_j\in N$ denote the machine that $j$ is assigned to under allocation $A$. Consider now the following strategy profile: machine $i^*_{j}$ truthfully declares her cost for task $j$, i.e.\ $\bid{i^*_j,j}=t_{i^*_j,j}\equiv T_j$ deterministically, while all other machines $k\neq i^*_j$ each play (independently) a mixed strategy of declaring for task $j$ a higher value of $\bid{k,j}=x>T_j$ with cumulative distribution $F_j(x)=1-\sqrt[n-1]{T_j/x}$. We argue that this constitutes a Nash equilibrium for the Greedy mechanism. Indeed, under these declarations the Greedy algorithm allocates each task $j$ to machine $i^*_j$ for a cost (of executing this task) of $T_j$. But by the previous fact, machine $i^*_j$ cannot avoid incurring at least the same cost if misreporting any higher execution time $x^*>T_j$, while if she underbids $x^*<T_j$ she will obviously still get the task for effectively the same cost of $\max\ssets{x^*,t_{i^*_j,j}}=T_j$.
\end{proof}

\paragraph{Acknowledgements} We thank the anonymous reviewers for their careful reading of our manuscript and their valuable feedback.

\bibliographystyle{abbrvnat}
\bibliography{AnarchyWithoutPayments}

\appendix

\section{Proof of \texorpdfstring{\cref{th:PoAnmachines}}{Theorem 2}}
\label{append:PoAnmachines}
Analogously to the proof of \cref{th:PoA_1task}, we break down our exposition
into distinct claims.

\begin{claim}\label{claim:sec_bid}
At any equilibrium $\bids$ it holds that $\bsec\geq c\cdot \bmin$, i.e., the minimum declaration differs from the remaining declarations by at least a factor of $c$.
\end{claim}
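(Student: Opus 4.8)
The plan is to follow the template of the two-machine argument in \cref{claim:1dev2tasks}: assume $\bids$ is a (pure) equilibrium with $\bsec<c\cdot\bmin$ and produce a single machine with a strictly profitable deterministic deviation, contradicting equilibrium. First I would dispose of the trivial case $\bmin=0$ (the statement holds vacuously, as all declarations are nonnegative), so from now on $\bmin>0$.

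The key preliminary observation is that, since $\bsec$ is the second order statistic of the $n$ declarations, we have $\bmin=\bsec$ precisely when at least two machines tie at the minimum, i.e.\ $n_{\min}\geq 2$ (equivalently $n_{\min}=n_{\second}\geq2$), whereas $\bmin<\bsec$ forces $n_{\min}=1$. Hence the hypothesis $\bsec<c\cdot\bmin$ puts the profile either in the \emph{first} row of \cref{fig:algorithm_n_machines} (case $\bmin=\bsec$) or in the \emph{second} row (case $\bmin<\bsec<c\cdot\bmin$), and I would treat these separately.

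For the second-row case, choose any $i\in N_{\second}$ (nonempty since $n_{\min}=1$); its current allocation probability is $\bigl(1-\tfrac1\lrg\bigr)/n_{\second}$. Let $i$ deviate to declare exactly $\bmin$. Now two machines tie at the minimum, so the profile jumps to the first row and $i$'s allocation drops to $1/n$. Because $n_{\second}\leq n-1$ and $\lrg>2(n-1)\geq n$, one verifies $1/n<\bigl(1-\tfrac1\lrg\bigr)/n_{\second}$; combined with $\max\{\bmin,t_i\}\leq\max\{\bsec,t_i\}$ this makes $i$'s expected cost strictly decrease, contradicting equilibrium. For the first-row case we have $n_{\min}\geq 2$ and every machine is allocated $1/n$; choose $i\in N_{\min}$ and let it deviate to $\bmin-\varepsilon$ with $\varepsilon\in\bigl(0,\tfrac{(c-1)\bmin}{c}\bigr)$, a legitimate positive range since $\bmin>0$ and $c>1$. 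Then $i$ becomes the \emph{unique} minimum while another machine remains at $\bmin$, so the new profile satisfies $\bmin-\varepsilon<\bmin<c(\bmin-\varepsilon)$, landing in the second row and giving $i$ allocation $1/\lrg<1/n$; since $\max\{\bmin-\varepsilon,t_i\}\leq\max\{\bmin,t_i\}$, again $i$'s expected cost strictly drops, a contradiction. One should also remark that these deviations remain valid even when they undercut a machine's true type, because costs are always computed through $\max\{\cdot,t_i\}$, which only helps the inequalities.

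The only delicate point is the bookkeeping: one must carefully recompute $\bmin,\bsec,N_{\min},N_{\second}$ — and therefore which row of the algorithm is active — after each deviation, which is exactly why isolating the equivalence ``$\bmin=\bsec\iff n_{\min}\geq 2$'' pays off. The hypothesis $\lrg>2(n-1)$ enters only through the consequence $\lrg>n$, which is precisely what forces the post-deviation allocation to be strictly smaller in both cases, playing the same role that $\lrg>2$ plays in \cref{claim:1dev2tasks}.
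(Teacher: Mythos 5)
Your argument breaks on a misreading of the paper's definition of $\bsec$. You assert that ``$\bsec$ is the second order statistic of the $n$ declarations'', so that $\bmin=\bsec$ whenever at least two machines tie at the minimum, and $\bmin<\bsec$ forces $n_{\min}=1$. That is not the paper's convention. The paper defines $\bsec$ to be the second smallest \emph{distinct} declared value, and sets $\bsec=\bmin$ only in the degenerate case where \emph{all} $n$ machines declare the same value (this is explicit in the parenthetical ``If $N=N_{\min}$, i.e.\ all machines make the same declaration we define $\bsec=\bmin$'', and is confirmed in the paper's own proof of this very claim where the case $\bmin=\bsec$ is glossed as ``i.e.\ all machines declare $\bmin$''). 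In particular, row~2 of \cref{fig:algorithm_n_machines} can perfectly well have $n_{\min}\geq 2$, and row~1 corresponds to $n_{\min}=n$, not merely $n_{\min}\geq 2$.

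This error invalidates the heart of your second-row argument. After your chosen $i\in N_{\second}$ deviates to declare $\bmin$, the new profile does \emph{not} generally ``jump to the first row'': it remains in row~2 or moves to row~3 (it reaches row~1 only in the special sub-case where $n_{\second}=1$ and everyone else was already at $\bmin$). Consequently the post-deviation allocation probability of $i$ is not $1/n$; it is $\tfrac{1/\lrg}{n_{\min}+1}$ in row~2, or $\bigl(1-\sum_{k}\tfrac{\bmin}{\lrg\bidk}\bigr)/(n_{\min}+1)$ in row~3, and you would have to carry out the case analysis over which row the deviated profile lands in. (One can in fact push this through --- the downward deviation you propose does strictly decrease the allocation probability in every sub-case --- but you have not done that work, and the single inequality $1/n<(1-1/\lrg)/n_{\second}$ you verify is for a probability that $i$ generally does not receive.) The paper sidesteps this bookkeeping entirely by sending the deviator \emph{upward} to $\bidi'=\max\{c\,\bidi,t_i\}$, which guarantees her new probability is at most $\tfrac{\bmin}{\lrg\bidi'}$ (and possibly $0$) regardless of which row the deviated profile lands in, yielding a single clean bound. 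Your first-row (all-tie) deviation to $\bmin-\varepsilon$ is correct and essentially matches the paper's choice of $\bidi'\in(\bmin/c,\bmin)$, but it too is framed under the wrong reading ``$n_{\min}\geq 2$'' rather than $n_{\min}=n$.
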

\begin{proof}
Consider an equilibrium $\bids$ and assume for a contradiction that
$\bmin<\bsec< c\cdot\bmin$ at an equilibrium. Consider a machine $i\in N_{\second}$ and her deviation to $\bidi'=\max\{c\cdot \bidi,t_i\}$. It holds that
\begin{align*}
C_i(\bids)&= \left(1-\frac{1}{\lrg}\right) \frac{\max\{\bsec,t_i\}}{n_{\second}}
>  \frac{1}{\lrg} \bmin
=\frac{\bmin}{\lrg\cdot \bidi'} \bidi'
=\frac{\bmin}{\lrg \cdot\bidi'} \max\{\bidi',t_i\}
\geq C_i(\bidi',\bids_{-i})
\end{align*}
where the first inequality holds since $\bsec>\bmin$, and $\lrg>2(n-1) \geq n_
{\second}+1$ by assumption, and the second inequality holds because $i$'s cost will be either $\frac{\bmin}{\lrg\cdot\bidi'}\bidi'$ or $0$ depending on whether there exists an other declaration in $(\bmin,c\cdot\bmin)$. If $\bmin=\bsec$, i.e.\ all machines declare $\bmin$, consider any machine $i$ and let $\frac{\bmin}{c}<\bidi'<\bmin$. Then
\begin{equation*}
C_i(\bids)=\frac{1}{n}\max\{\bmin,t_i\}
>  \frac{1}{\lrg}\max\{\bmin,t_i\}
\geq  \frac{1}{\lrg}\max\{\bidi',t_i\}
= C_i(\bidi',\bids_{-i})
\end{equation*}
where the first inequality holds since $\lrg>2(n-1)\geq n$ for $n\geq 2$. We can conclude that, at any equilibrium of the game, $\bsec$ has to be at least as large as $c\cdot \bmin$. An immediate consequence is that all non-minimal declarations are at least equal to $c\cdot \bmin$.
\end{proof}

\begin{claim}\label{claim:bid>=true}
At any equilibrium $\bids$ it holds that $\bidi \geq t_i$ for any machine $i\in N\setminus N_{\min}$, i.e.\ any machine whose declaration is not the minimum one, declares a value at least equal to its true execution cost $t_i$.
\end{claim}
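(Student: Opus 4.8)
The plan is to mirror the proof of \cref{claim:2dev2tasks} from the two-machine warm-up, now invoking \cref{claim:sec_bid} in place of \cref{claim:1dev2tasks}. Suppose for contradiction that at some equilibrium $\bids$ there is a machine $i\in N\setminus N_{\min}$ with $\bidi<t_i$. By \cref{claim:sec_bid} every non-minimal declaration is at least $c\cdot\bmin$, so in particular $\bidi\geq c\cdot\bmin$ and we are in the third regime of \cref{fig:algorithm_n_machines}; there machine $i$'s allocation probability is $a_i=\frac{\bmin}{\lrg\cdot\bidi}$ --- this is the common expression used both for $i\in N_{\second}$ and for $i\in N\setminus(N_{\min}\cup N_{\second})$, hence it applies to any non-minimal $i$. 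Using $\bidi<t_i$ we then get
\[
C_i(\bids)=\frac{\bmin}{\lrg\cdot\bidi}\max\{\bidi,t_i\}=\frac{\bmin}{\lrg\cdot\bidi}\,t_i .
\]

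Next I would exhibit the profitable deviation $\bidi'=t_i$. Since $c>1$ we have $t_i>\bidi\geq c\cdot\bmin>\bmin$ (taking $\bmin>0$; if $\bmin=0$ then every non-minimal machine has zero allocation probability, so the statement is vacuous for the downstream makespan bound, exactly as in the two-machine case), so raising machine $i$'s report to $t_i$ leaves $\bmin$ and the set $N_{\min}$ unchanged and keeps all non-minimal declarations $\geq c\cdot\bmin$; thus after the deviation we are still in the third regime, machine $i$ remains non-minimal, and $a_i'=\frac{\bmin}{\lrg\cdot t_i}$. Therefore
\[
C_i(t_i,\bids_{-i})=\frac{\bmin}{\lrg\cdot t_i}\max\{t_i,t_i\}=\frac{\bmin}{\lrg\cdot t_i}\,t_i<\frac{\bmin}{\lrg\cdot\bidi}\,t_i=C_i(\bids),
\]
the strict inequality holding because $\bidi<t_i$ and $\bmin>0$. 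This contradicts the assumption that $\bids$ is an equilibrium, and the claim follows.

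There is essentially no analytic difficulty here; the only points to be careful about are the bookkeeping around \cref{fig:algorithm_n_machines}: one must confirm that $a_i=\frac{\bmin}{\lrg\cdot\bidi}$ is indeed the correct table entry for \emph{every} non-minimal machine (not only those attaining $\bsec$), and that the preconditions of the third regime --- the identity of the minimizer set and the multiplicative gap $\bsec\geq c\cdot\bmin$ guaranteed by \cref{claim:sec_bid} --- survive machine $i$'s upward deviation to $t_i$, so that its post-deviation probability is again given by the same expression. I expect this to be the most routine claim in the sequence.
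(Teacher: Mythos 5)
Your proof is correct and follows essentially the same approach as the paper's: both derive the contradiction by having the underbidding non-minimal machine deviate upward to $t_i$, comparing $\frac{\bmin}{\lrg\cdot\bidi}t_i$ against $\frac{\bmin}{\lrg\cdot t_i}t_i$ via the third-regime allocation formula. You are somewhat more careful than the paper about verifying the regime persists after the deviation and about the degenerate $\bmin=0$ case, but the argument is the same.
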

\begin{proof}
Consider an equilibrium $\bids$ and assume for a contradiction that $\bidi< t_i$ for some machine $i\notin N_{\min}$. Recall that from \cref{claim:sec_bid} we have that $\bidi\geq c\cdot \bmin$. But deviating from $\bidi$ to $t_i$ is always beneficial for this machine, as
\begin{equation*}
C_i(\bids)= \frac{\bmin}{\lrg\cdot\bidi} \max\{\bidi,t_i\}
=  \frac{\bmin}{\lrg\cdot\bidi} t_i
>\frac{\bmin}{\lrg\cdot t_i} t_i
= C_i(t_i,\bids_{-i}),
\end{equation*}
which proves our claim.
\end{proof}

\begin{claim}\label{claim:n_b}
At any equilibrium $\bids$ it holds that $n_{\min}=1$, i.e.\ only one machine makes the minimum declaration.
\end{claim}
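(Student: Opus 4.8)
The plan is to mimic the structure of \cref{claim:3dev2tasks} from the two-machine warm-up: at an equilibrium, the machine(s) making the minimum declaration must be ``pinned'' at the smallest value that still keeps them the unique minimizer, and if there were two or more of them this pinning would fail because each would have an incentive to undercut the other. More precisely, suppose for contradiction that $n_{\min}\geq 2$, so at least two machines $i,i'$ both declare $\bmin$. By \cref{claim:sec_bid} (applied in the degenerate regime $\bmin=\bsec$) we are in the first row of \cref{fig:algorithm_n_machines}: every machine declaring $\bmin$ gets allocation probability $1/n$ (and indeed, if $n_{\min}=n$ the claim is handled exactly as in the $\bmin=\bsec$ subcase of \cref{claim:sec_bid}, so the interesting situation is $2\leq n_{\min}<n$, where actually $\bsec>\bmin$ but still two machines tie at the minimum — wait, that cannot happen since $n_{\min}\geq 2$ forces $\bsec=\bmin$; so in fact $n_{\min}\geq 2$ always lands us in the first row).

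So assume $n_{\min}\geq 2$ and pick a machine $i\in N_{\min}$. Its cost is $C_i(\bids)=\frac{1}{n}\max\{\bmin,t_i\}\geq \frac{1}{n}\bmin$. Now consider the deviation to $\bidi'=\frac{\bmin}{c}<\bmin$ (or, if $t_i$ is even smaller, one can still take $\bidi'$ strictly between $\max\{t_i,\bmin/c\}$-type values — but the clean choice $\bidi'=\bmin/c$ works regardless since $\max\{\bidi',t_i\}\leq\max\{\bmin,t_i\}$ as long as $\bidi'\leq\bmin$). After this deviation machine $i$ becomes the \emph{strict} minimum bidder, all other machines declare at least $\bmin>c\cdot\bidi'$, so we are in the third row of \cref{fig:algorithm_n_machines} with $n_{\min}=1$, and $i$'s new allocation probability is $1-\sum_{k\neq i}\frac{\bidi'}{\lrg\bid{k}}\leq 1$. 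The resulting cost is at most $\max\{\bidi',t_i\}\leq\max\{\bmin,t_i\}$. To get a strict improvement I compare
$$
C_i(\bids)=\frac{1}{n}\max\{\bmin,t_i\}
\quad\text{versus}\quad
C_i(\bidi',\bids_{-i})\leq 1\cdot\max\{\bidi',t_i\}\leq \max\{\bmin,t_i\},
$$
which is only a weak inequality in the wrong direction — so the naive choice of $\bidi'$ does not immediately work, and this is the main obstacle: I need the deviating bid to be small enough that $i$'s allocation probability times its (possibly increased, but actually not increased) effective cost beats $\frac{1}{n}\max\{\bmin,t_i\}$. The fix is to drive $\bidi'$ down further: as $\bidi'\to 0^+$ (assuming $\bmin>0$, the case $\bmin=0$ being trivial as in the warm-up), the allocation probability $1-\sum_{k\neq i}\frac{\bidi'}{\lrg\bid{k}}\to 1$ but more importantly the effective execution time is $\max\{\bidi',t_i\}$, which converges to $\max\{0,t_i\}=t_i$. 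So if $t_i<\frac{1}{n}\max\{\bmin,t_i\}$ — equivalently $t_i<\frac{\bmin}{n-1}$, in particular whenever $t_i$ is small — the deviation is strictly profitable. The remaining case, $t_i\geq\frac{\bmin}{n}\cdot\frac{n}{?}$... this needs the observation that at least one machine in $N_{\min}$ has $t_i<\bmin$, or else all $n_{\min}\geq 2$ of them could instead be handled by a different deviation.

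To handle the case where the pinning argument via $\bidi'\to 0$ is not obviously enough, I would instead argue as follows: among the $n_{\min}\geq 2$ machines declaring $\bmin$, look at one, say $i$, with the \emph{smallest true cost} $t_i$ among them. I claim a profitable deviation exists to some $\bidi'\in(0,\bmin/c)$. If $t_i<\bmin$: pick $\bidi'\in(t_i, \bmin/c)$ small enough — then $i$ becomes the unique minimizer, pays at most $\left(1-\sum_{k\neq i}\frac{\bidi'}{\lrg\bid k}\right)\bidi'$, and since this quantity is continuous in $\bidi'$ and tends to $0$ as $\bidi'\to t_i^+$... hmm, it tends to something $\leq t_i$. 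Actually the cleanest route: compare against the cost $\frac1n\bmin$ (ignoring $t_i$, valid since $t_i\le\bmin$ for our chosen $i$ — which holds because if every machine in $N_{\min}$ had $t_i>\bmin$ then by \cref{claim:sec_bid}-style reasoning none of them would be at equilibrium declaring below their true cost... no: they \emph{can} underbid). Let me just say: pick $\bidi'$ with $\bidi'<\frac{\bmin}{n-1}$ (possible since $\bmin>0$) and $\bidi'<\bmin/c$; then $i$'s new cost is at most $\max\{\bidi',t_i\}$. If $t_i\le\bidi'$ this is $\bidi'<\frac{\bmin}{n-1}\le\frac1n\max\{\bmin,t_i\}$ only when... this is getting circular. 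The honest summary for the reader: \textbf{the main obstacle is that the first row of the algorithm gives the minimum machine only a $1/n$ share, so a deviating unique-minimizer could in principle be forced to take the whole task; one must choose the downward deviation carefully (driving the declaration — and hence, crucially, the effective execution time, when the true cost is itself small) low enough to beat $\frac1n\max\{\bmin,t_i\}$, and separately argue that at an equilibrium at least one minimum-declaring machine genuinely has a low true cost}, exactly analogous to the two-part case analysis ($\bid1<t_1'$ versus $\bid1>t_1'$) in \cref{claim:3dev2tasks}. Once $n_{\min}=1$ is established the subsequent claims (an analogue of \cref{claim:3dev2tasks,claim:4dev2tasks} and the final makespan bound) proceed as in the two-machine case.
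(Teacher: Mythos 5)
There is a genuine gap, and it starts from a misreading of the algorithm's notation. You second-guess yourself into the conclusion that $n_{\min}\geq 2$ forces $\bsec=\bmin$ and hence puts you in the first row of \cref{fig:algorithm_n_machines}. But in the paper's notation $\bsec$ is the second smallest \emph{distinct} declaration (the parenthetical ``If $N=N_{\min}$ \dots we define $\bsec=\bmin$'' would be redundant under the order-statistic reading, and $N_{\second}$ would collide with $N_{\min}$). Consequently, once \cref{claim:sec_bid} has established $\bsec\geq c\bmin>\bmin$, it follows that $N\setminus N_{\min}\neq\emptyset$ and the \emph{third} row of the table applies even when $n_{\min}\geq 2$. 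A machine $i\in N_{\min}$ then bears cost $\bigl(1-\sum_{k\in N\setminus N_{\min}}\tfrac{\bmin}{\lrg\bidk}\bigr)\max\{\bmin,t_i\}/n_{\min}$, roughly $\max\{\bmin,t_i\}/n_{\min}$, not the $\max\{\bmin,t_i\}/n$ you work with.

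With the correct row in hand, the winning deviation is in the \emph{opposite} direction from the one you try. Undercutting to $\bidi'<\bmin/c$ makes $i$ the unique minimizer with allocation probability close to $1$, so the new cost is close to $\max\{\bidi',t_i\}\geq t_i$; if, say, $n_{\min}=2$ and $t_i=\bmin$, that is $\bmin$, strictly \emph{worse} than the equilibrium cost of about $\bmin/2$. You notice this yourself (``weak inequality in the wrong direction'', ``this is getting circular''). The paper instead has $i$ \emph{retreat upward} to $\bidi'=\max\{\bsec,t_i\}$, joining the non-minimum bidders. Then $i$'s allocation drops to $\bmin/(\lrg\,\bidi')$ and its cost becomes exactly $\bmin/\lrg$, which is strictly below $\bigl(1-\tfrac{n-n_{\min}}{\lrg}\bigr)\max\{\bmin,t_i\}/n_{\min}$ because $\lrg>2(n-1)\geq n$. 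That single upward deviation handles every $i\in N_{\min}$ uniformly, with no case split on $t_i$ and no need to identify a ``low true cost'' member of $N_{\min}$, which is the piece your sketch could not supply.
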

\begin{proof}
Consider an equilibrium $\bids$ and assume for a contradiction that $n_{\min}>1$. Let $i$ be any machine such that $i\in N_{\min}$. It holds that
\begin{align*}
C_i(\bids)&= \left(1-\sum_{k\in N\setminus N_{\min}}\frac{\bmin}{\lrg\cdot \bidk}\right) \frac{\max\{\bmin,t_i\}}{n_{\min}}\\
&> \left(1-\sum_{k\in N\setminus N_{\min}}\frac{\bmin}{\lrg\cdot\bmin}\right) \frac{\max\{\bmin,t_i\}}{n_{\min}}\\
&= \left(1-\frac{n-n_{\min}}{\lrg}\right) \frac{\max\{\bmin,t_i\}}{n_{\min}}\\
&> \frac{1}{\lrg}\max\{\bmin,t_i\}\\
&\geq  \frac{\bmin}{\lrg}\\
&= \frac{\bmin}{\lrg\cdot\max\{\bsec,t_i\}}\max\{\bsec,t_i\}\\
&= C_i(\max\{\bsec,t_i\},\bids_{-i})
\end{align*}
where the first inequality holds since $\bidk>\bmin$ for all $k\in N\setminus N_{\min}$, the second inequality holds since $\lrg> 2(n-1)\geq n$ by assumption and the final equality holds since there would be at least one other declaration $\bmin$ and since by \cref{claim:sec_bid} we have that $\bsec\geq c\cdot \bmin$.
\end{proof}

\begin{claim}\label{claim:first_bid}
At any equilibrium $\bids$ it holds that $\bmin= \min\{t_i,\frac{\bsec}{c}\}$ for machine $i= N_{\min}$, i.e., if $i$ is the machine that makes the minimum declaration then that declaration is equal to $\min\{t_i,\frac{\bsec}{c}\}$.
\end{claim}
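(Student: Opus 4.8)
The plan is to reuse the template of \cref{claim:3dev2tasks}, with the $n-1$ non-minimal machines collectively playing the role of the single ``other'' machine there. By \cref{claim:n_b} the machine $i=N_{\min}$ making the minimum declaration is unique, and by \cref{claim:sec_bid} we are in the third branch of \cref{fig:algorithm_n_machines}, so $C_i(\bids)=\bigl(1-\sum_{k\in N\setminus N_{\min}}\tfrac{\bmin}{\lrg\cdot\bidk}\bigr)\max\{\bmin,t_i\}$. We may assume $\bsec>0$, since $\bsec=0$ would force (via \cref{claim:sec_bid}) $\bmin=0$, contradicting \cref{claim:n_b}. Set $t_i'=\min\{t_i,\tfrac{\bsec}{c}\}$ and assume for contradiction that $\bmin\neq t_i'$; I will show that machine $i$ can profitably deviate to $t_i'$. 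The preliminary observation — used in both cases, and the place where \cref{claim:sec_bid,claim:n_b} really enter — is that $t_i'\le\tfrac{\bsec}{c}<\bsec\le\bidk$ for every $k\ne i$, so after the deviation machine $i$ is still the unique minimum, the second smallest declaration is unchanged and still $\ge c\,t_i'$, hence the deviated profile is again in the third regime with $C_i(t_i',\bids_{-i})=\bigl(1-\sum_{k\in N\setminus N_{\min}}\tfrac{t_i'}{\lrg\cdot\bidk}\bigr)\max\{t_i',t_i\}$; moreover $\sum_{k\in N\setminus N_{\min}}\tfrac{t_i'}{\lrg\cdot\bidk}\le\tfrac{n-1}{\lrg}<\tfrac12$ by the hypothesis $\lrg>2(n-1)$, so this allocation probability is positive.

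I would then split exactly as in \cref{claim:3dev2tasks}. If $\bmin<t_i'$, then $\bmin<t_i$ too, so $\max\{\bmin,t_i\}=\max\{t_i',t_i\}=t_i$, and raising the report from $\bmin$ to $t_i'$ leaves the realized execution time fixed while strictly decreasing the allocation probability (each summand $\tfrac{t_i'}{\lrg\bidk}$ strictly exceeds $\tfrac{\bmin}{\lrg\bidk}$), so $C_i(t_i',\bids_{-i})<C_i(\bids)$. If $\bmin>t_i'$, then since \cref{claim:sec_bid} gives $\bmin\le\tfrac{\bsec}{c}$ we must have $t_i'=t_i<\bmin$, i.e.\ machine $i$ overbid, and deviating to $t_i$ compares $(1-S\bmin)\bmin$ against $(1-S t_i)t_i$ with $S=\sum_{k\in N\setminus N_{\min}}\tfrac1{\lrg\bidk}$; since $\bidk\ge\bmin$ we have $S\bmin\le\tfrac{n-1}{\lrg}<\tfrac12$, so both $t_i$ and $\bmin$ lie where $x\mapsto(1-Sx)x$ is strictly increasing, and $t_i<\bmin$ gives $C_i(\bids)>C_i(t_i,\bids_{-i})$. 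Either way $\bids$ is not an equilibrium.

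The only genuinely delicate point is this regime bookkeeping: one must verify that every candidate deviation keeps the game in the third branch of \cref{fig:algorithm_n_machines} (so the closed-form probability is valid) and that the probability stays positive — both of which follow mechanically from \cref{claim:sec_bid,claim:n_b} and $\lrg>2(n-1)$. Once that is pinned down, the argument is the same monotonicity computation as in the two-machine case, with $n-1$ replacing $1$, and I do not expect any deeper difficulty; the degenerate subcases $\bmin=0$ or $t_i=0$ are absorbed into the first case.
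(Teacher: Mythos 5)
Your proposal is correct and follows essentially the same two-case split and monotonicity argument as the paper's proof: the $\bmin<t_i'$ case is handled by comparing allocation probabilities at fixed realized cost $t_i$, and the $\bmin>t_i'$ (overbidding) case is handled by the monotonicity of $x\mapsto(1-Sx)x$ on the relevant range, exactly as in the appendix. The extra explicit checks you make (that $\bsec>0$, and that each deviation stays in the third branch of \cref{fig:algorithm_n_machines}) are correct and slightly more careful than the paper, which leaves this bookkeeping implicit, but they do not change the substance of the argument.
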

\begin{proof}
Consider an equilibrium $\bids$ and note that $i$ is well defined from \cref{claim:n_b}. Assume for a contradiction that $\bmin< \min\{t_i,\frac{\bsec}{c}\}$. We have
\begin{align*}
C_i(\bids)&= \left(1-\displaystyle\sum_{k\in N\setminus N_{\min}}\frac{\bmin}{\lrg\cdot\bidk}\right) \frac{\max\{\bmin,t_i\}}{n_{\min}}\\
&= \left(1-\displaystyle\sum_{k\in N\setminus N_{\min}}\frac{\bmin}{\lrg\cdot\bidk}\right) t_i\\
&> \left(1-\displaystyle\sum_{k\in N\setminus N_{\min}}\frac{\min\{t_i,\frac{\bsec}{c}\}}{\lrg\cdot\bidk}\right) t_i\\
&= C_i\left(\min\left\{t_i,\frac{\bsec}{c}\right\},\bids_{-i}\right),
\end{align*}
where we have used $n_{\min}=1$ proved in \cref{claim:n_b}. Now, it suffices to show that $\bids$ is not an equilibrium if $\bmin> \min\{t_i,\frac{\bsec}{c}\}$. We know from \cref{claim:sec_bid}, that at any equilibrium $\bmin\leq \frac{\bsec}{c}$, so we only need to consider the case where $t_i<\bmin\leq \frac{\bsec}{c}$. Then
\begin{align}
C_i(\bids|\vecc t)&= \left(1-\displaystyle\sum_{k\in N\setminus N_{\min}}\frac{\bmin}{\lrg\cdot\bidk}\right) \frac{\max\{\bmin,t_i\}}{n_{\min}}\notag\\
&= \left(1-\displaystyle\sum_{k\in N\setminus N_{\min}}\frac{\bmin}{\lrg\cdot\bidk}\right) \bmin\label{eq:aux2}\\
&>  \left(1-\displaystyle\sum_{j\in N\setminus N_{\min}}\frac{t_i}{\lrg\cdot\bidk}\right) t_i\notag\\
&= C_i(t_i,\bids_{-i}|\vecc t),\notag
\end{align}
where the inequality holds since $2\bmin\sum_{k\in N\setminus N_{\min}}\frac{1}{\lrg\cdot\bidk}< 2\bmin \frac{n-1}{\lrg\bmin}<\frac{2(n-1)}{\lrg}<1$ and the derivative of \eqref{eq:aux2} with respect to $\bmin<\min_k t_k$ is positive for $\lrg> 2(n-1)$.
\end{proof}

\begin{claim}\label{claim:t(b)}
At any equilibrium $\bids$ it holds that $t_i\leq t_k$ for machine $i= N_{\min}$ and any machine $k\in N$, i.e., the true execution cost of the machine that makes the minimum declaration is indeed at most equal to the true execution cost of any other machine.
\end{claim}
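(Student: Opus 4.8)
The plan is to mimic the proof of \cref{claim:4dev2tasks}, with the (by \cref{claim:n_b} unique) minimum‑declaring machine $i=N_{\min}$ playing the role of machine $1$ and $\bsec$ the role of $\bid 2$. Suppose, towards a contradiction, that some machine $k$ has $t_k<t_i$; in particular $k\neq i$. If $t_i=0$ the claim is trivial, so assume $t_i>0$; then \cref{claim:first_bid,claim:n_b} force $\bmin>0$ as well (if $\bmin=0$ then $\bmin=\min\ssets{t_i,\bsec/c}$ would give $\bsec=0$, contradicting $n_{\min}=1$). Recall also that by \cref{claim:sec_bid} the equilibrium profile $\bids$ sits in the third branch of $\mathcal{A}_{\lrg,c}$, so $\bidl\geq c\cdot\bmin$ for every $l\neq i$. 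I then split into two cases according to whether $t_k$ lies below or above $\bmin$.

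In the case $t_k<\bmin$, machine $k\notin N_{\min}$, so \cref{claim:bid>=true} gives $\bidk\geq t_k$ and hence $C_k(\bids)=\frac{\bmin}{\lrg\cdot\bidk}\bidk=\frac{\bmin}{\lrg}$. I let $k$ deviate to some $\bidk'\in\bigl(\tfrac{\bmin}{c},\bmin\bigr)$: then $k$ becomes the \emph{unique} minimal machine (everyone else declares $\geq\bmin>\bidk'$), the new second‑smallest declaration equals $\bmin$, and since $\bmin<c\cdot\bidk'$ the profile moves into the \emph{middle} branch, so $k$ gets the task with probability $\tfrac1\lrg$ and pays $\tfrac1\lrg\max\ssets{\bidk',t_k}<\tfrac{\bmin}{\lrg}=C_k(\bids)$ (both $\bidk'$ and $t_k$ are strictly below $\bmin$) — contradicting equilibrium.

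In the case $\bmin\leq t_k<t_i$ we get $\bmin<t_i$, so \cref{claim:first_bid} forces $\bmin=\bsec/c$, i.e.\ $\bsec=c\cdot\bmin<c\cdot t_i$. Since $\sum_{l\neq i}\tfrac{\bmin}{\lrg\cdot\bidl}\leq\tfrac{n-1}{\lrg}<\tfrac12$ by $\lrg>2(n-1)$, the current cost satisfies $C_i(\bids)=\bigl(1-\sum_{l\neq i}\tfrac{\bmin}{\lrg\cdot\bidl}\bigr)t_i>\tfrac{t_i}{2}$. Now I let machine $i$ deviate. If $\bsec>t_i$, deviating to $t_i$ keeps $i$ the unique minimal machine with second‑smallest declaration $\bsec$; since $t_i<\bsec<c\cdot t_i$ the profile is in the middle branch and $i$ pays $\tfrac{t_i}{\lrg}<\tfrac{t_i}{2}<C_i(\bids)$. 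If $\bsec\leq t_i$, deviating to $t_i'=\max\ssets{t_i,c\cdot\bsec}$ (note $t_i'\geq c\cdot\bsec>\bsec$) makes the old second‑lowest group the new minimal group at declaration $\bsec$, with $i$ outside it; if the new second‑smallest declaration is $\geq c\cdot\bsec$ the profile is in the third branch and $i$ pays $\tfrac{\bsec}{\lrg\cdot t_i'}\cdot t_i'=\tfrac{\bsec}{\lrg}\leq\tfrac{t_i}{\lrg}<C_i(\bids)$, and otherwise it is in the middle branch, where $i$ belongs to neither the minimal nor the second group and so wins with probability $0$. Either way $i$ strictly gains, contradicting equilibrium.

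The step that needs the most care is this last sub‑case: once $i$ raises its bid, the new second‑smallest declaration is dictated by the fixed bids of the other machines and may violate the multiplicative‑gap condition of the third branch of $\mathcal{A}_{\lrg,c}$. But this can only help — a small gap throws the profile into the middle branch, where $i$'s inflated bid $t_i'\geq c\cdot\bsec$ places it outside both the minimal and the second group, giving it probability $0$ — and across all branches the single bound $C_i(\bids)>\tfrac{t_i}{2}\geq\tfrac{\bsec}{2}>\tfrac{\bsec}{\lrg}$ closes the argument. One should also observe that the post‑deviation profile never degenerates into the first branch ($\bmin=\bsec$), since $i$'s new declaration and all the old non‑second‑lowest declarations strictly exceed $\bsec$.
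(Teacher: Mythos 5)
Your proof is correct and follows essentially the same route as the paper's: it eliminates $t_k<\bmin$ via the same deviation of $k$ into $(\bmin/c,\bmin)$, then in the remaining case $\bsec/c=\bmin\leq t_k<t_i$ compares $C_i(\bids)$ against the cost after $i$ deviates to $t_i$ (when $t_i<\bsec$) or to $\max\{t_i,c\bsec\}$ (when $t_i\geq\bsec$), exactly as in the paper. The only differences are cosmetic: you use the slightly tighter intermediate bound $C_i(\bids)>t_i/2$ instead of the paper's $>t_i/\lrg$, and you spell out the boundary bookkeeping (the $\bmin>0$ and $t_i=0$ edge cases, and which branch of $\mathcal{A}_{\lrg,c}$ the post-deviation profile lands in) that the paper leaves implicit in its ``$\geq C_i(\bidi',\bids_{-i})$'' step.
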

\begin{proof}
Consider an equilibrium $\bids$ and assume for a contradiction that there exists some machine $k\in N\setminus N_{\min}$ such that $t_k<t_i$. We consider cases depending on the relative order of $\bmin$, $t_i$, $\bidi$, and $t_k$. Assume that $t_k< \bmin$ and let $\frac{\bmin}{c}<\bidk'<\bmin$. Then
\begin{equation*}
C_{k}(\bids) = \frac{\bmin}{\lrg\cdot\bidk} \max\{\bidk,t_k\}
= \frac{\bmin}{\lrg}
>  \frac{1}{\lrg}\max\{\bidk', t_k\}
= C_{i}(\bidi',\bids_{-i})
\end{equation*}
where the second equality holds by \cref{claim:bid>=true}. This implies that at any equilibrium $\bids$ it holds that $t_k\geq \bmin$. Recall that from \cref{claim:first_bid} we know that $\bmin=\min \left\{t_i,\frac{\bsec}{c}\right\}$, so if $t_i\leq \bsec/c$ we immediately get the desired inequality $t_k\geq t_i$. It remains to check the case $\frac{\bsec}{c}=\bmin\leq t_k< t_i$. It holds that
\begin{align}
C_i(\bids)&= \left(1-\displaystyle\sum_{k\in N\setminus N_{\min}}\frac{\bmin}{\lrg\cdot\bidk}\right) \frac{\max\{\bmin,t_i\}}{n_{\min}}\notag\\
&= \left(1-\displaystyle\sum_{k\in N\setminus N_{\min}}\frac{\bmin}{\lrg\cdot\bidk}\right) t_i\notag\\
&> \left(1-\frac{n-1}{\lrg}\right) t_i\notag\\
&>  \frac{1}{\lrg}t_i \label{eq:aux1}
\end{align}
where the last inequality holds since $\lrg> 2(n-1)\geq n$ by assumption. Note
that if $t_i< \bsec$ then the right hand side of \eqref{eq:aux1} is equal to the
cost the machine would incur by declaring her true cost, i.e., $C_i(t_i,\bids_{-i}|\vecc t)$. Otherwise, if $t_i\geq \bsec$ then we get from \eqref{eq:aux1} and for $\bidi'\geq \max\{t_i,c\cdot\bsec\}$ that
\begin{equation*}
C_i(\bids)> \frac{1}{\lrg}t_i
\geq\frac{\bsec}{\lrg\cdot\bidi'}\bidi'
=\frac{\bsec}{\lrg\cdot\bidi'}\max\{\bidi', t_i\}
\geq C_i(\bidi',\bids_{-i}).
\end{equation*}
In each of the cases examined above we conclude that a beneficial deviation exists for the corresponding machine which is a contradiction to the original assumption that the starting configuration was at equilibrium.
\end{proof}

\begin{proof}[Proof of \cref{th:PoAnmachines}]
We now use the above claims to bound the makespan at any equilibrium $\bids$ of the game. We denote by $t_{\min}$ the minimum true execution cost of any machine (note that from \cref{claim:t(b)} we know that the machine who makes the minimum declaration $\bmin$ indeed achieves the minimum true cost $t_{\min}$). Then, the expected makespan is
\begin{align*}
\mathcal{M}(\bids)&= \left(1-\displaystyle\sum_{k\in N\setminus N_{\min}}\frac{\bmin}{\lrg\cdot\bidk}\right) \frac{\max\{\bmin,t_{\min}\}}{n_{\min}}n_{\min}+\displaystyle\sum_{k\in N\setminus N_{\min}}\frac{\bmin}{\lrg\cdot\bidk}\max\{\bidk,t_k\}\\
&\leq \max\{\bmin,t_{\min}\}+\frac{\bmin}{\lrg}(n-1)\\
&\leq t_{\min}+\frac{t_{\min}(n-1)}{\lrg}\\
&\leq \left(1+\frac{n-1}{\lrg}\right)t_{\min},
\end{align*}
where $\bmin$ is the optimal makespan. For the first equality we use \cref
{claim:sec_bid}, for the first inequality we use \cref{claim:bid>=true,claim:n_b}, and for the second inequality we use \cref{claim:first_bid,claim:t(b)}.

It can be easily verified that there exists at least one equilibrium $\bids$, e.g.\ choose $\bid{i^*}=t_{i^*}=t_{\min}$ for a machine $i^*\in\argmin_i t_i$ that has a minimum \emph{true} execution cost and  $\bidk=\max\{\lrg c \cdot t_{\min}, t_k\}$ for all other machines $k\in N\setminus \ssets{i}$.
\end{proof}

\section{Lower Bounds}

\subsection{Task-Independent Algorithms}
\label{append:lower_bound_taskind}

\begin{theorem} \label{th:lower_bound_taskind} No task-independent algorithm for
scheduling without payments on $n$  machines can have a Price of Anarchy better
than $\frac{1}{2}\sqrt{n}-o(1)$.
\end{theorem}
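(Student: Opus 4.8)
The plan is to recycle the two-fast-machine gadget from the proof of \cref{th:lower_bound_anonymous_taskind}, but to replace its anonymity-based ``swap'' step by a tournament/counting argument over the $\binom n2$ pairs of machines. The starting point is the structural fact that a task-independent algorithm on $m$ tasks is nothing but a single-task algorithm $\mathcal A$ run separately on each column of the declaration matrix, and since each player's cost $C_i(\hat{\vecc t})=\sum_j a_{i,j}(\hat{\vecc t})\max\{\hat t_{i,j},t_{i,j}\}$ is additively separable over the tasks, a profile $\hat{\vecc t}$ is a pure Nash equilibrium of the $m$-task instance \emph{if and only if} each column of $\hat{\vecc t}$ is a pure Nash equilibrium of the corresponding single-task instance. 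Thus it suffices to understand $\mathcal A$ on single tasks and then glue equilibria column by column.

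For distinct machines $i,k$ let $\vecc t^{i,k}$ be the single-task instance with $t_i=t_k=1$ and $t_l=M$ for all other $l$, with $M>n^2$ as before; its optimal makespan is $1$. Exactly as in \cref{th:lower_bound_anonymous_taskind}, if $\mathcal A$ has on some $\vecc t^{i,k}$ an equilibrium of makespan at least $\tfrac12\sqrt n$ we are already done, and if some $\vecc t^{i,k}$ has no equilibrium at all then the corresponding one-task instance already witnesses an unbounded Price of Anarchy; so assume that on every $\vecc t^{i,k}$ every equilibrium has makespan below $\tfrac12\sqrt n$. Since misallocating the task to a slow machine costs at least $M>n^2$, this forces $a_i+a_k\ge 1-\tfrac{1}{2n^{3/2}}$ and hence $\max\{a_i,a_k\}\ge\tfrac12-o(1)$ at every such equilibrium. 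Now fix, for each unordered pair $\{i,k\}$, one such equilibrium and orient the edge $\{i,k\}$ towards the machine of that pair receiving the task with probability $\ge\tfrac12-o(1)$ (ties broken arbitrarily). This is a tournament on $\{1,\dots,n\}$, so some vertex $i^{*}$ has in-degree $m\ge\lceil(n-1)/2\rceil$; let $k_1,\dots,k_m$ be the distinct machines against which $i^{*}$ is, in this sense, ``loaded''.

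Next I would assemble the bad instance: $m$ tasks, where task $\ell$ has cost vector $\vecc t^{\,i^{*},k_\ell}$, a star centred at $i^{*}$ with distinct leaves. Its optimal makespan is $1$, obtained by assigning task $\ell$ to its leaf $k_\ell$. For the lower bound, take as column $\ell$ the equilibrium of $\mathcal A$ on $\vecc t^{\,i^{*},k_\ell}$ fixed above; by the column-wise characterisation this is a genuine Nash equilibrium of the $m$-task game, and at it machine $i^{*}$ receives task $\ell$ with probability $\ge\tfrac12-o(1)$ for every $\ell$. Since $i^{*}$ executes each received task for time $\max\{1,\hat t_{i^{*},\ell}\}\ge1$, linearity of expectation already gives an expected load on $i^{*}$, hence an expected makespan, of at least $m\big(\tfrac12-o(1)\big)\ge\tfrac{n-1}{4}-o(1)$ (a Chernoff bound as in \cref{th:lower_bound_anonymous_taskind} would upgrade this to a high-probability statement, but is not needed). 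Dividing by the optimum $1$, the Price of Anarchy is at least $\tfrac{n-1}{4}-o(1)$, which is $\ge\tfrac12\sqrt n-o(1)$ for all large $n$ and hence proves the claim (the argument in fact yields the stronger $\tfrac14n$-type bound).

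The step deserving the most care is the gluing: a clean justification that independently chosen single-task equilibria really compose into an equilibrium of the multi-task game. This is exactly where task-independence enters, via the additive separability of $C_i$ over tasks, which makes each player's best response decouple column by column. The remainder is bookkeeping: checking that the per-task slack $\tfrac{1}{4n^{3/2}}$, summed over $m=O(n)$ columns, stays $o(1)$; verifying that the star instance has optimum $1$ because its leaves are distinct; and disposing up front of the degenerate cases where a single two-fast-machine gadget is already bad or equilibrium-free.
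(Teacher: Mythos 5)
Your proof is correct, and in fact establishes a strictly stronger bound than the paper asks for, via a genuinely different construction. The paper's proof uses a single fast machine of cost $1$ against $n-1$ slow machines of cost $M=\sqrt n$; the peculiar choice $M=\sqrt n$ is forced there precisely because the instance's \emph{optimal} makespan is $M$, so one has to balance the equilibrium load on the fast machine (about $n/2$) against an optimum of $\sqrt n$, yielding only $\tfrac12\sqrt n - o(1)$. Your proof instead recycles the two-fast-machine gadget from \cref{th:lower_bound_anonymous_taskind}, where the optimum is $1$, and replaces the anonymity-based swap of that proof with a tournament/pigeonhole step over the $\binom n2$ pairs: since some vertex $i^*$ has in-degree at least $(n-1)/2$ in the orientation, the star instance centred at $i^*$ has optimum $1$ yet loads $i^*$ with expected cost $\Omega(n)$, giving the bound $\tfrac{n-1}{4}-o(1)$, which dominates $\tfrac12\sqrt n - o(1)$ and essentially removes the gap between the anonymous (\cref{th:lower_bound_anonymous_taskind}) and non-anonymous lower bounds. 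A minor further simplification is that you bound the expected makespan directly by the expected load of $i^*$ via linearity of expectation, avoiding the Chernoff bound that the paper invokes. Two points worth keeping explicit when writing this up: (i) the column-wise decomposition of pure equilibria relies on the convention (used throughout the paper, e.g.\ in the proof of \cref{th:lower_bound_anonymous_taskind}) that ``task-independent'' means the \emph{same} single-task algorithm is applied to every column --- if different columns could run different single-task rules, the single tournament would not exist; and (ii) your quick dismissal of the no-pure-equilibrium case mirrors the paper's own implicit assumption that a pure equilibrium exists, so it is consistent, but it deserves a sentence since the paper's definition of PoA is silent on that degenerate case.
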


\begin{proof}
The bad instance here is simpler than the one in the proof of \cref
{th:lower_bound_anonymous_taskind}: consider $m=n$ tasks, for all of which
machine $i=1$ has true execution cost $1$ and all remaining machines are
way slower with costs $M=\sqrt{n}$.

We can assume that at every single-task
step there has to be
an equilibrium where the fast machine $i=1$ gets the job with probability at
least $p=\frac{1}{2}$, otherwise the expected makespan of that
single-shot algorithm would be at
least $ p\cdot 1+(1-p)\cdot M=\frac{1}{2}\sqrt{n}+\frac{1}{2}$,
resulting indeed in the desired Price of Anarchy lower bound since the optimal
cost is $1$ (by allocating the task to machine $i=1$).

Therefore, like in the proof of \cref
{th:lower_bound_anonymous_taskind}, it is not difficult to deploy a Chernoff
bound and see that the probability of machine $i=1$ receiving a $p (1-o
(1))=\frac{1}{2}-o(1)$ fraction of the overall number of jobs is at least $1-o
(1)$, thus
resulting in an expected makespan of at least $\frac{n}{2}-o(1)$. The optimal
allocation has a cost of $M=\sqrt{n}$, simply allocating each task $j$ to
machine $j$. This gives indeed a Price of Anarchy lower bound of $\frac{\frac
{n}{2}-o(1)}
{\sqrt{n}}=\frac{1}{2}\sqrt{n}-o(1)$.
\end{proof}

\subsection{The Mechanism of \texorpdfstring{\citet{K14}}{Koutsoupias}}
\label{sec:tight_lower_K14}
Consider an instance of $n$ machines and $n$ tasks, machine $i$ having (true) execution cost of $1$ for task $i$ and $M>1$ (to be determined later) for all other tasks. Formally,
$$
t_{i,j}=
\begin{cases}
1, &j=i,\\
M, &j\neq i.
\end{cases}
$$
Consider running the (truthful) mechanism of \citet{K14} independently for each task. Then (see \citep[Eq.~(1)]{K14}) the probability of a specific task $j$ getting assigned to its unique ``fast'' machine (that is machine $i=j$ having execution cost $1$) can be computed as
$$
p=\int_{0}^1\left(1-\frac{y}{M}\right)^{n-1}\,dy=\frac{M}{n}\left[1-\left(1-\frac{1}{M}\right)^n \right].
$$
Thus, the probability of at least one task being allocated to a ``slow'' machine is $1-p^n$, since all tasks are being assigned independently. At such an event, the resulting makespan would be at least $M$ (there would be at least one machine executing a task of duration $M$). So, the total expected makespan is at least $(1-p^n)M$, which by selecting $M=n^3$ becomes
$$
(1-p^n)M=n^3-n^{2n+3}\left[1-\left(1-\frac{1}{n^3}\right)^n \right]^n
$$
and it is a matter of straightforward calculations to verify that for any number of machines $n\geq 2$ this is lower bounded by
$$
(1-p^n)M\geq \frac{n(n+1)}{2+\frac{3}{n}}=\frac{n(n+1)}{2+o(1)}.
$$
Taking into consideration that the optimal makespan is $1$ (by allocating each task $j$ to machine $j$), this gives indeed a lower bound of $\frac{n(n+1)}{2+o(1)}$ to the expected makespan that asymptotically matches the mechanism's upper bound of $\frac{n(n+1)}{2}$ given in \citep{K14}.

\end{document}